\documentclass[envcountsame,a4paper,10pt]{llncs}
\usepackage[margin=3.5cm]{geometry}
\pagestyle{plain}
 
\def\Vhrulefill{\leavevmode\leaders\hrule height 0.7ex depth \dimexpr0.4pt-0.7ex\hfill\kern0pt}

\usepackage{times}

\usepackage{enumerate}
\usepackage{amsmath}
\usepackage{amssymb}
\usepackage{amsfonts}
\usepackage{ifthen}
\usepackage{xspace}
\usepackage{xcolor}
\usepackage{graphicx}
\usepackage{wrapfig}
\usepackage{comment}
\usepackage{verbatim}


\newcommand{\logicClFont}[1]{\mathsf{#1}}        
\newcommand{\complexityClassFont}[1]{\textsl{#1}} 
\newcommand{\mathCommandFont}[1]{\mathrm{#1}}     
\newcommand{\problemFont}[1]{\textsc{#1}}     

\newcommand{\literal}[1]{{\protect\ensuremath{#1}}\xspace}
\renewcommand{\problem}[1]{\literal{\problemFont{#1}}}
\newcommand{\logic}[1]{\literal{\logicClFont{#1}}}
\newcommand{\commandOperator}[2]{\literal{\mathord{\mathCommandFont{#1}\ifthenelse{\equal{#2}{}}{}{(\nobreak#2\nobreak)}}}}

\newcommand{\todo}[2][]{\textnormal{\color{red}\scriptsize+++\ifthenelse{\equal{#1}{}}{}{#1 says: }#2+++}}

\newcommand{\NEXPTIME}{\literal{\complexityClassFont{NEXPTIME}}}

\newcommand{\true}{\protect\raisebox{-1pt}{$\true$}}
\newcommand{\false}{\protect\ensuremath\raisebox{-1pt}{$\false$}}
\newcommand{\dep}[1][]{\commandOperator{=}{#1}}
\newcommand{\fr}[1][]{\commandOperator{Fr}{#1}}

\newcommand{\dom}[1][]{\commandOperator{dom}{#1}}
\newcommand{\rng}[1][]{\commandOperator{range}{#1}}
\newcommand{\rel}[1][]{\commandOperator{rel}{#1}}
\newcommand{\df}{\logic{D}}

\newcommand{\ind}{\logic{Ind}}
\newcommand{\exc}{\logic{Exc}}
\newcommand{\inc}{\logic{Inc}}
\newcommand{\incexc}{\logic{Inc/Exc}}
\newcommand{\dtwo}{\literal{\df^2}}

\newcommand{\fo}{\logic{FO}}
\newcommand{\fotwo}{\literal{\fo^2}}
\newcommand{\foc}{\logic{FOC}}
\newcommand{\foctwo}{\logic{FOC^2}}
\newcommand{\mA}{\literal{\mathfrak{A}}}
\newcommand{\mB}{\literal{\mathfrak{B}}}
\newcommand{\mG}{\literal{\mathfrak{G}}}

\newcommand{\mf}{\mathfrak}
\renewcommand{\true}{\literal{\top}}
\renewcommand{\false}{\literal{\bot}}
\newcommand{\sat}[1][]{\commandOperator{\problem{Sat}}{#1}}

\newcommand{\finsat}[1][]{\commandOperator{\problem{FinSat}}{#1}}

\newcommand{\mi}{\mathit}
\newcommand{\dfn}{:=}
\newcommand{\cC}{\mathcal{C}}

\title{Decidability of predicate logics with team semantics
\thanks{Juha Kontinen was supported by grant 292767 of the
Academy of Finland. Antti Kuusisto was supported by the ERC grant 647289
``CODA" and the Jenny and Antti Wihuri Foundation.
Jonni Virtema was supported by grant 266260 of the
Academy of Finland and a grant by the Finnish Academy of Science and Letters.}}

\author{Juha Kontinen\inst{1} \and Antti Kuusisto\inst{2} \and Jonni~Virtema\inst{1,3}}

\institute{
University of Helsinki, Finland\\
\and
University of Bremen, Germany\\
\and
Leibniz Universit{\"a}t Hannover, Germany\\
\email{\{juha.kontinen, jonni.virtema\}@helsinki.fi}\\
\email{kuusisto@uni-bremen.de}
}


\begin{document}
\maketitle

\begin{abstract}
We study the complexity of predicate logics based on team semantics.
We show that the satisfiability problems of 
two-variable independence logic and inclusion logic 
are both $\NEXPTIME$-complete. Furthermore,
we show that the validity problem of
two-variable dependence logic is undecidable,
thereby solving an open problem from the team
semantics literature.
We also briefly analyse the 
complexity of the Bernays-Sch\"onfinkel-Ramsey prefix classes of
dependence logic.
\end{abstract}

\section{Introduction}

%
%
%
The satisfiability problem of 
\emph{two-variable logic} $\fo^2$ was shown to
be $\NEXPTIME$-complete in \cite{grkova97}.
The extension of two-variable logic with counting quantifiers, $\foctwo$,
was proved decidable in \cite{grotro97,paszte97}, and it
was subsequently shown to be
$\NEXPTIME$-complete in \cite{IEEEonedimensional:pratth}.
Research on extensions and variants of
two-variable logic is currently very active.
Recent research efforts have mainly concerned decidability
and complexity issues
in restriction to particular classes of 
structures and also questions
related to different built-in features and operators that
increase the expressivity of the base language.
Recent articles in the field include for example
\cite{IEEEonedimensional:kieronski},
\cite{IEEEonedimensional:charatonic},
\cite{helkuu14},
\cite{KMP-HT14},
%
%
\cite{IEEEonedimensional:tendera},
and several others.
%

%
%
%
%
%

%
In this article we study two-variable fragments of logics based on \emph{team semantics}.
Team semantics was originally conceived in
\cite{ho97} in the context of \emph{independence friendly} (IF) \emph{logic} \cite{hisa89}.
In \cite{va07}, V\"{a}\"{a}n\"{a}nen introduced \emph{dependence logic},
which is a novel approach to IF logic based on new atomic formulas $\dep[x_1,...x_k,y]$
stating that the interpretation of the variable $y$ is functionally determined by
the interpretations of the variables $x_1,...,x_k$.
After the introduction of dependence logic, research on logics based on team semantics
has been active. Several different logics with different applications have
been suggested.
%
%
In particular, team semantics has proved to be a powerful framework
for studying different kinds of \emph{dependency notions}.
\emph{Independence logic} \cite{independencelogic} is a
variant of dependence logic that extends first-order logic by
new atomic formulas $x_1,...,x_k\, \bot\, y_1,...,y_l$
with the intuitive meaning that the interpretations of
the variables $x_1,...,x_k$ are informationally independent of
the interpretations of the variables $y_1,...,y_l$.
\emph{Inclusion logic} \cite{inclusionlogic} extends first-order logic by
atomic formulas $x_1,...,x_k\, \subseteq\, y_1,...,y_k$,
whose intuitive meaning is that tuples interpreting
the variables $x_1,...,x_k$ are also
tuples interpreting $y_1,...,y_k$.
Currently dependence, independence and inclusion logics are 
the three most important and most widely studied systems based on team semantics.
Both dependence logic and
independence logic are equiexpressive with existential
second-order logic (see \cite{va07}, \cite{independencelogic}),
and thereby capture $\mathrm{NP}$.
Curiously, inclusion logic is equiexpressive with \emph{greatest fixed point logic} (see \cite{GH2013}),
and thereby characterizes $\mathrm{P}$ on finite ordered
models. While the descriptive complexity of most known logics based on
team semantics is understood reasonably well, 
the complexity of related satisfiability problems has
received somewhat less attention.
The satisfiability problem of the
two-variable fragment of dependence logic and $\mathrm{IF}$-logic have been studied in \cite{kokulovi2}.
It is shown that 
while the two-variable $\mathrm{IF}$-logic is undecidable,
the corresponding fragment of dependence logic is $\NEXPTIME$-complete.
In this article we establish that the satisfiablity problems of the
two-variable fragments of
independence and inclusion logics are
likewise $\NEXPTIME$-complete. This result is established via proving a
more general theorem that implies also a range of other decidability results for a
variety of team-semantics-based logics with generalized dependency notions.
Furthermore, we prove that the \emph{validity} problem of
two-variable dependence logic is undecidable; this result is the 
main result of the paper. The problem 
has been open for some time in the team semantics literature and
has been explicitly posed in, e.g., \cite{dukovo16}, \cite{kokulovi2},
\cite{jonnithesis},
and elsewhere.
%

%
%
%
%
In addition to studying two-variable logics, we study the Bernays-Sch\"onfinkel-Ramsey prefix class, i.e.,
sentences with the quantifier prefix $\exists^*\forall^*$.
%
%
%
We show that---as in the case of
ordinary first-order logic---the prefix class $\exists^*\forall^*$ of
$\fo(\mathcal{A})$ is decidable for any uniformly polynomial time computable class $\mathcal{A}$ of
generalized dependencies closed under substructures.
We prove inclusion in $2\NEXPTIME$, and furthermore, for 
vocabularies of fixed arity, we show $\NEXPTIME$-completeness.
We also prove a partial converse of
the result concerning logics $\fo(\mathcal{A})$
with a decidable prefix class $\exists^*\forall^*$,
see Theorem \ref{whatevertheorem}.
%

%
%
%
%
%

%
%

\section{Preliminaries}\label{preliminaries}
%

The domain of a structure $\mA$ is denoted by $A$. We assume that the
reader is familiar with first-order logic \fo. The extension of \fo with counting quantifiers $\exists ^{\ge i}$ is denoted by \foc. 
The two-variable fragments $\fo^2$ and $\foc^2$ are the fragments of \fo and \foc 
with formulas in which only the variables $x$ and $y$ appear.
We let $\Sigma_1^1$ denote the fragment of formulas of second-order logic of the form
$\exists X_1...\exists X_k\, \varphi$, where $X_1,...,X_k$ are relation symbols 
and $\varphi$ a first-order formula. $\Sigma_1^1(\foctwo)$ is 
the extension of $\foctwo$ consisting of formulas of the
form $\exists X_1...\exists X_k\, \chi$, where $X_1,...,X_k$ are relation symbols 
and $\chi$ a formula of $\foctwo$. 
%

%
%
%



\subsection{Logics based on team semantics}
Let $\mathbb{Z}_+$ denote the set of positive integers,
and let $\mathrm{VAR} = \{\, v_i\ |\ i\in\mathbb{Z}_+\ \}$ be
the set of  exactly all first-order \emph{variable symbols}.
We mainly use metavariables $x,y,z,x_1,x_2$, etc.,
in order to refer to variable symbols in $\mathrm{VAR}$.
We let $\overline{x},\overline{y},\overline{z},\overline{x}_1,\overline{x}_2$, etc.,
denote finite nonempty tuples of variable symbols, i.e., 
tuples in $\mathrm{VAR}^n$ for some $n\in\mathbb{Z}_+$.
When we study two-variable logics, we use the metavariables $x$ and $y$,
and assume they denote distinct variables in $\mathrm{VAR}$.
Let $D\subseteq\mathrm{VAR}$ be a \emph{finite}, possibly empty set.
Let $\mathfrak{A}$ be a model.
We do not allow for models to have an empty domain,
so $A\not=\emptyset$.
A function $s:D\rightarrow A$ is called an
\emph{assignment} with codomain $A$.
If $\overline{x}=(x_1,\dots,x_n)$, we denote $(s(x_1),\dots,s(x_n))$ by $s(\overline{x})$.
We let $s[a/x]$ denote the variable assignment
with the domain $D\cup\{\ x\ \}$ and 
codomain $A$ defined such that 
$s[a/x](y) = a$ if $y = x$,
and
$s[a/x](y) = s(y)$ if $y \not= x$.
Let $T\in\mathcal{P}(A)$, where $\mathcal{P}$
denotes the power set operator.
We define 
$s[\,  T/x\, ]\ =\ \{\ s[a/x]\ |\ a\in T\ \}.$
%
%
%
%

%
Let $D\subseteq\mathrm{VAR}$ be a finite, possibly empty set of
first-order variable symbols. Let $X$ be a set of assignments
$s:D\rightarrow A$. Such a set $X$ is a \emph{team}
with the \emph{domain} $D$ and \emph{codomain} $A$.
Note that the empty set is a team, as is the set $\{\emptyset\}$
containing only the empty assignment.
The team $\emptyset$ does not have a unique
domain; any finite subset of $\mathrm{VAR}$ is a
domain of $\emptyset$.
The domain of the team $\{\emptyset\}$ is $\emptyset$.
Let $X$ be a team with the domain $D$ and codomain $A$.
Let $T\subseteq A$. We define
$X[ T/x] = \{\ s[a/x]\ |\ a\in T,\ s\in X\ \}$.
%
%
%
Let $F:X\rightarrow \mathcal{P}(A)$ be a function.
We define
$X[\, F/x\, ]=\bigcup\limits_{s\, \in\, X}s[\, F(s)/x\, ]$.
Let $C\subseteq A$. We define 
$X\upharpoonright C = \{\, s\in X\ |\ s(x)\in C\text{ for all }x\in D\ \}.$
Let $X$ be a team with domain $D$.
Let $k\in\mathbb{Z}_+$, and 
let $y_1,...,y_k$ be variable symbols.
Assume that $\{y_1,...,y_k\} \subseteq D$. We define
$\mathrm{rel}\bigl(X,(y_1,...,y_k)\bigr)\ =\ \{\ \bigl(s(y_1),...,s(y_k)\bigr)\ |\ s\in X\ \}.$
%
%
%
%

%
Let $\tau$ be a relational vocabulary, i.e., a vocabulary containing
relation symbols only. (In this article we consider only relational vocabularies.)
The syntax of a logic based on team semantics is usually given
in negation normal form. We shall also follow this convention in the current article.
For this reason, we define the syntax of first-order logic as follows.
\[
\varphi\ \ ::=\ \ R(\overline{x})\ |\ \neg R(\overline{x})\ |\
x_1 = x_2\ |\ \neg x_1 = x_2\ |\  (\varphi_1\vee\varphi_2)\ |\ (\varphi_1\wedge\varphi_2)\ |\
\ \exists x\varphi\ |\ \forall x \varphi\ ,
\]
%
%
where $R\in\tau$. The first four formula formation 
rules above introduce \emph{first-order literals} to the language. Below we shall consider
logics $\fo(\mathcal{A})$, where the above syntax is
extended by clauses of the type $A_Q\,(\overline{y}_1,...,\overline{y}_k)$.
Here $A_Q$ is (a symbol corresponding to) a \emph{generalized atom}
in $\mathcal{A}$ and each $\overline{y}_i$ is a
tuple of variables. Before considering such novel atoms, let us define \emph{lax team semantics}
for first-order logic.
\begin{definition}[\cite{ho97,va07}]\label{def:semantics}
Let $\mA$ be a model and $X$ a team with codomain $A$. The satisfaction relation
$\mA\models _X \varphi$ is defined as follows.
\begin{enumerate}
\item If $\varphi$ is a first-order literal, then $\mA\models_X \varphi$ iff for all $s\in X$: $\mA,s\models_{\fo}\varphi$.
Here $\models_{\fo}$ refers to the ordinary Tarskian satisfaction relation of first-order logic.

\item $\mA\models_X \psi \wedge \varphi$ iff $\mA\models _X \psi$ and $\mA\models _X \varphi$.
\item $\mA\models_X \psi \vee \varphi$ iff there exist teams $Y$ and $Z$ such that $X=Y\cup Z$, $\mA\models_Y \psi$, and $\mA\models _Z \varphi$.

\item   $\mA \models_X \exists x\, \psi$ iff $\mA \models _{X[F/x]} \psi$ for
some $F\colon X\to (\mathcal{P}(A)\setminus\{\emptyset\})$.

\item $\mA \models_X \forall x\, \psi$ iff $\mA \models _{X[A/x]} \psi$.
\end{enumerate}
\end{definition}
%
%
%
%
%
%
%
%
%
%
%
%
%
%
%
%
%
Finally, a sentence $\varphi$ is true in a model $\mA$  ($\mA\models \varphi$)  if $\mA\models _{\{\emptyset\}} \varphi$. 

%
%
%
\begin{proposition}[\cite{ho97,va07}]\label{whatever}
Let $\psi$ be a formula of first-order logic.
We have $\mA \models_X \psi$ iff\, 
$\mA,s \models_{\fo} \psi$ for all $s\in X$.
\end{proposition}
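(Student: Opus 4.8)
The plan is to prove the equivalence by induction on the structure of the first-order formula $\psi$, exploiting the fact that the claimed property (often called \emph{flatness}) is built directly into the atomic clause of Definition~\ref{def:semantics}. For the base case, when $\psi$ is a first-order literal, clause~1 of Definition~\ref{def:semantics} literally states that $\mA\models_X\psi$ iff $\mA,s\models_{\fo}\psi$ for all $s\in X$, so there is nothing to prove. The conjunction and universal quantifier cases are then immediate from the induction hypothesis: for $\psi\wedge\varphi$ one unfolds clause~2 and applies the hypothesis to each conjunct, and for $\forall x\,\psi$ one unfolds clause~5, observes that $X[A/x]=\{\,s[a/x]\mid s\in X,\ a\in A\,\}$, and notes that quantifying over all $s\in X$ and all $a\in A$ in the Tarskian semantics matches the team condition on $X[A/x]$.

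The two cases that require an actual construction are disjunction and existential quantification. For $\psi\vee\varphi$, the forward direction is routine: a witnessing split $X=Y\cup Z$ with $\mA\models_Y\psi$ and $\mA\models_Z\varphi$ yields, via the induction hypothesis, that every $s\in X$ lies in $Y$ or $Z$ and hence satisfies $\psi$ or $\varphi$ in the Tarskian sense. For the converse I would define the canonical split $Y=\{\,s\in X\mid \mA,s\models_{\fo}\psi\,\}$ and $Z=\{\,s\in X\mid \mA,s\models_{\fo}\varphi\,\}$; the assumption that every $s\in X$ satisfies $\psi\vee\varphi$ gives $X=Y\cup Z$, and the induction hypothesis gives $\mA\models_Y\psi$ and $\mA\models_Z\varphi$. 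For $\exists x\,\psi$, the forward direction uses that any admissible $F\colon X\to(\mathcal{P}(A)\setminus\{\emptyset\})$ has nonempty values, so each $s\in X$ contributes at least one $s[a/x]\in X[F/x]$ satisfying $\psi$, whence $\mA,s\models_{\fo}\exists x\,\psi$. For the converse I would set $F(s)=\{\,a\in A\mid \mA,s[a/x]\models_{\fo}\psi\,\}$, which is nonempty for every $s\in X$ precisely because $\mA,s\models_{\fo}\exists x\,\psi$; then every assignment in $X[F/x]$ satisfies $\psi$, and the induction hypothesis closes the case.

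The step I expect to be most delicate is the existential case, since it is here that the \emph{lax} nature of the semantics matters: the requirement that $F$ take only nonempty values is exactly what makes both directions go through, and the choice $F(s)=\{\,a\mid \mA,s[a/x]\models_{\fo}\psi\,\}$ is the cleanest way to guarantee nonemptiness uniformly. A minor point worth checking throughout is the degenerate team $X=\emptyset$: there every clause reduces to a vacuously true statement on both sides (with $F$ the empty function and $X[F/x]=\emptyset$ in the existential case), so the induction remains consistent. No strictness or disjointness of the splits is needed anywhere, which is why the argument is entirely uniform across the connectives.
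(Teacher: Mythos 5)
Your proof is correct and is exactly the standard flatness induction: the paper itself gives no proof of Proposition~\ref{whatever}, merely citing \cite{ho97,va07}, and the argument in those sources is the same structural induction you give, with the canonical split for disjunction and the choice $F(s)=\{\,a\mid \mA,s[a/x]\models_{\fo}\psi\,\}$ (made admissible by the lax requirement of nonempty values) for the existential case. Your handling of the empty team and your observation that negation needs no separate case (the syntax being in negation normal form, with negated literals covered by the atomic clause) are both accurate, so nothing is missing.
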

%

%
%

%
In this paper we consider first-order logic extended with generalized dependency atoms. Before
formally introducing the notion of a generalized dependency atom,
we recall some particular atoms familiar from the literature related to team semantics.
\emph{Dependence atoms} $\dep[x_1,\dots,x_n,y]$, inspired by the slashed quantifiers of Hintikka and Sandu \cite{hisa89}, were introduced by V\"a\"an\"anen \cite{va07}. The intuitive meaning of the atom $\dep[x_1,\dots,x_n,y]$ is that the value of the variable $y$ depends solely on the values of the variables $x_1,\dots,x_n$. The semantics for dependence atoms is defined as follows:
\[
\mA\models_X \dep[x_1,\dots,x_n,y]  \text{ iff } \forall s,s'\in X: \text{ if } s((x_1,\dots,x_n))=s'((x_1,\dots,x_n)) \text{ then } s(y)=s'(y).
\]
\emph{Dependence logic} (\df) is the extension of first-order logic with dependence atoms.

While dependence atoms of dependence logic declare dependences between variables,
\emph{independence atoms}, introduced by Gr\"adel and V\"a\"an\"anen \cite{independencelogic}, 
do just the opposite; independence atoms are used to declare independencies between variables.
Independence atom is an atomic formula of the form
$(x_1,...,x_k)\, \bot_{(z_1,\dots,z_t)}\, (y_1,...,y_l)$
with the intuitive meaning that for any fixed interpretation of the variables $z_1,\dots,z_t$, the interpretations of
the variables $x_1,...,x_k$ are independent of
the interpretations of the variables $y_1,...,y_l$.
The semantics for independence atoms is defined as follows:
\begin{align*}
&\mA\models_X (x_1,...,x_k)\, \bot_{(z_1,\dots,z_t)}\, (y_1,...,y_l) \,\text{ iff } \, \forall s,s'\in X\, \exists s''\in X: \bigwedge_{i\leq t} s(z_i)=s'(z_i) \\
&\text{implies that }\bigwedge_{i\leq k} s''(x_i)=s(x_i) \wedge \bigwedge_{i\leq t} s''(z_i)=s(z_i) \wedge \bigwedge_{i\leq l} s''(y_i)=s'(y_i).
\end{align*}
\emph{Independence logic} (\ind) is the extension of first-order logic with independence atoms.
Galliani \cite{inclusionlogic} introduced \emph{inclusion} and
\emph{exclusion atoms}. The intuitive meaning of the inclusion atom $(x_1,\dots,x_n)\subseteq (y_1,\dots,y_n)$ is that tuples interpreting the variables $x_1\dots,x_n$ are also tuples interpreting $y_1,\dots,y_n$. The intuitive meaning of the exclusion atom $(x_1,\dots,x_n)\mid (y_1,\dots,y_n)$ on the other hand is that tuples interpreting the variables $x_1\dots,x_n$ and the tuples interpreting $y_1,\dots,y_n$ are distinct. The semantics for inclusion atoms and exclusion atoms is defined as follows:
\begin{align*}
\mA\models_X (x_1,\dots,x_n)\subseteq (y_1,\dots,y_n) & \text{ iff } \forall s\in X \,\exists s'\in X: s((x_1,\dots,x_n))=s'((y_1,\dots,y_n)),\\
\mA\models_X (x_1,\dots,x_n)\mid (y_1,\dots,y_n) & \text{ iff } \forall s,s'\in X:  s((x_1,\dots,x_n))\not=s'((y_1,\dots,y_n)).
\end{align*}
The extension of first-order logic with inclusion atoms (exclusion atoms) is called \emph{inclusion logic} (\emph{exclusion logic}) and denoted by $\inc$ ($\exc$). The extension of first-order logic with both inclusion atoms and exclusion atoms is called \emph{inclusion/exclusion logic} and denoted by $\incexc$.

\subsection{Generalized atoms}
In this section we first give the well known definition of generalized quantifiers (Lindstr\"om quantifiers \cite{Lindstrom66}). We then show how each generalized quantifier naturally gives rise to a generalized atom. Finally, we discuss on some fundamental properties of first-order logic extended with generalized atoms.
Generalized atoms were first defined in \cite{kuusigen}.

Let $(i_1,...,i_n)$ be a nonempty sequence of positive integers.
A generalized quantifier of the type $(i_1,...,i_n)$ is a class $\mathcal{C}$ of structures
$(A,B_1,...,B_n)$ such that the following conditions hold.
\begin{enumerate}
\item
$A\not=\emptyset$, and for each $j\in\{1,...,n\}$, we have $B_j \subseteq A^{i_j}$.
\item
If $(A',B_1',...,B_n')\in\mathcal{C}$ and if there is an isomorphism $f:A'\rightarrow A''$
from $(A',B_1',...,B_n')$ to another structure $(A'',B_1'',...,B_n'')$,
then $(A'',B_1'',...,B_n'')\in\mathcal{C}$.
\end{enumerate} 
Let $Q$ be a generalized quantifier of the type $(i_1,...,i_n)$.
Let $\mathfrak{A}$ be a model 
with the domain $A$.
We define $Q^{\mathfrak{A}}$ to be the set
$\{\ (B_1,...,B_n)\ |\ (A,B_1,...,B_n)\in Q\ \}.$
Let $n$ be a positive integer.
Let $Q$ be a generalized quantifier of the type $(i_1,...,i_n)$.
Extend the syntax of first-order logic
with atomic expressions of the type
$A_{Q}(\overline{y}_1,...,\overline{y}_n),$
where each $\overline{y}_j$ is a tuple of variables of length $i_j$.
Let $X$ be a team whose domain contains all variables
occurring in the tuples $\overline{y}_1,...,\overline{y}_n$.
Extend team semantics such that
$\mathfrak{A}\models_X A_{Q}(\overline{y}_1,...,
\overline{y}_n)$
if and only if
$\bigl(\mathrm{rel}(X,\overline{y}_1),...,
\mathrm{rel}(X,\overline{y}_n)\bigr)\in Q^{\mathfrak{A}}.$
The generalized quantifier $Q$ defines a
\emph{generalized atom} $A_{Q}$
of the type
$(i_1,...,i_n)$.
A generalized atom $A_Q$ is \emph{downwards closed} if
for all $\mathfrak{A}$, $X$ and $\overline{y}_1,...,\overline{y}_k$,
it holds that if
$\mathfrak{A}\models_X A_Q(\overline{y}_1,...,\overline{y}_k)$ and $Y\subseteq X$,
then $\mathfrak{A}\models_Y A_Q(\overline{y}_1,...,\overline{y}_k)$.
Similarly, a generalized atom $A_Q$ is \emph{closed under substructures} if for all $\mathfrak{A}$, $X$ and $\overline{y}_1,...,\overline{y}_k$,
it holds that if
$\mathfrak{A}\models_X A_Q(\overline{y}_1,...,\overline{y}_k)$, $\mathfrak{A}':=\mathfrak{A}\upharpoonright B$ and $X':=X\upharpoonright B$ for some  $B\subseteq A$,
then we have $\mathfrak{A}'\models_{X'} A_Q(\overline{y}_1,...,\overline{y}_k)$.
Finally,  a generalized atom $A_Q$ is \emph{universe independent} if for all $\mA$, $\mB$, $X$ and
$\overline{y}_1,...,\overline{y}_k$,  where both $A$ and $B$
are codomains for $X$, it holds that
$\mathfrak{A}\models_X A_Q(\overline{y}_1,...,\overline{y}_k)$
if and only if $\mathfrak{B}\models_X A_Q(\overline{y}_1,...,\overline{y}_k)$.
Let $\varphi$ be a formula of first-order logic, possibly extended with generalized atoms.
The set $\mathrm{Fr}(\varphi)$ of \emph{free variables} of $\varphi$ is defined
in the same way as in first-order logic. The set $\mathrm{Fr}(A_Q(\overline{y}_1,...,\overline{y}_k))$
of course contains exactly all variable that occur in the tuples $\overline{y}_i$.
The satisfiability problem of a (possibly team-semantics-based) logic $L$ takes as an
input a sentence of $L$ and asks whether $\mathfrak{A}\models\varphi$
for some model $\mathfrak{A}$. The validity problem asks,
given a sentence $\varphi$, whether $\mathfrak{A}\models\varphi$
for all models $\mathfrak{A}$.
Let $k\in\mathbb{Z}_+$ and let $A_Q$ be a generalized atom of the type $(i_1,...,i_n)$,
where $i_j\leq k$ for each $j$.
%
%
Let $\varphi(R_1,...,R_n)$ be a sentence of $\Sigma_1^1(\foc^k)$ with
unquantified relation symbols $R_1,...,R_n$ of
arities $i_1,...,i_n$, respectively.
Assume that for all models $\mathfrak{A}$ and teams $X$
with codomain $A$ and domain containing the
variables in $A_Q(\overline{x}_1,...,\overline{x}_n)$, we have 
$\mathfrak{A}\models_X A_Q(\overline{x}_1,...,\overline{x}_n)$ iff
$$\bigl(\mathfrak{A}, R_1 := \mathrm{rel}(X,\overline{x}_1),...,
R_n := \mathrm{rel}(X,\overline{x}_n)\bigr)\models_{\fo}\varphi(R_1,...,R_n).$$
Then we say that the atom
$A_Q$ is definable in $\Sigma_1^1(\foc^k)$.
%


We now show that, \emph{for any generalized atom} $A_Q$, the logic $\fo(A_Q)$ has the so-called locality property. We also show that, for a downwards closed atom  $A_Q$, all formulas of $\fo(A_Q)$ satisfy the
downwards closure property. These two properties  have previously turned out to be very useful in the study of  dependence logic.
%

%
%
%
%
%

%
Let $X$ be a team with domain $\{x_1,\ldots,x_k\}$,
and let $V\subseteq \{x_1,\ldots,x_k\}$. We denote by
$X(V)$ the team $\{s\upharpoonright V \mid s\in X\}$ with the domain $V$. The following proposition
 shows that the truth of an $\fo(A_Q)$-formula  depends only on the interpretations of the variables occurring free in the formula.
The proof uses the fact that generalized atoms satisfy the claim by definition. Otherwise the proof is identical to the corresponding proof given in \cite{inclusionlogic}.
\begin{proposition}[Locality]\label{freevar} Let $A_Q$ be a generalized atom and  $\varphi\in \fo(A_Q)$ a formula. If 
$V\supseteq \fr(\varphi)$, then $\mA \models _X\varphi$ if and only if $\mA \models _{X(V)} \varphi$.
\end{proposition}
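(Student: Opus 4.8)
The proposition asserts a locality property: the truth of an $\fo(A_Q)$-formula depends only on the interpretations of its free variables. This is a standard structural-induction argument, and since the paper states the proof is "identical to the corresponding proof given in [inclusionlogic]" except for the atom case, my plan is to carry out the induction on the structure of $\varphi$, treating the generalized atom case via the definition of $A_Q$ and handling the connectives and quantifiers by the usual team-semantics bookkeeping.

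The plan is to proceed by induction on $\varphi$. First I would fix notation: assume $V\supseteq\fr(\varphi)$ and recall $X(V)=\{s\upharpoonright V\mid s\in X\}$. For the base cases, if $\varphi$ is a first-order literal, then by Proposition \ref{whatever} satisfaction reduces to Tarskian satisfaction at each assignment, and since a literal only mentions variables in $\fr(\varphi)\subseteq V$, the value $\mA,s\models_{\fo}\varphi$ depends only on $s\upharpoonright V$; hence $\mA\models_X\varphi$ iff $\mA\models_{X(V)}\varphi$. For the generalized-atom case $\varphi = A_Q(\overline{y}_1,\dots,\overline{y}_n)$, the key observation is that $\mathrm{rel}(X,\overline{y}_i)$ depends only on the restriction of each $s\in X$ to the variables in $\overline{y}_i$, all of which lie in $\fr(\varphi)\subseteq V$; therefore $\mathrm{rel}(X,\overline{y}_i)=\mathrm{rel}(X(V),\overline{y}_i)$ for every $i$, and the membership condition $\bigl(\mathrm{rel}(X,\overline{y}_1),\dots,\mathrm{rel}(X,\overline{y}_n)\bigr)\in Q^{\mA}$ is unchanged. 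This is exactly the sense in which generalized atoms satisfy locality "by definition," as the paper remarks.

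**The inductive step:**

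For the connectives and quantifiers I would use the induction hypothesis together with the semantic clauses of Definition \ref{def:semantics}. Conjunction is immediate. For disjunction, given a splitting $X=Y\cup Z$ witnessing $\mA\models_X\psi\vee\chi$, I would push it through the restriction operator, noting that $X(V)=Y(V)\cup Z(V)$ and that $\fr(\psi),\fr(\chi)\subseteq\fr(\varphi)\subseteq V$, so the induction hypothesis applies to each disjunct; the converse direction lifts a splitting of $X(V)$ back to a splitting of $X$. The existential case $\exists x\,\psi$ requires the most care, since $\fr(\psi)$ may contain $x$: I would apply the hypothesis with the variable set $V\cup\{x\}$ and verify that the supplement operators commute appropriately with restriction, i.e. that $X[F/x](V\cup\{x\})$ and $X(V)[F'/x]$ match for a suitably transported function $F'$. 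The universal case $\forall x\,\psi$ is the analogous deterministic version with $F\equiv A$.

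**Main obstacle:**

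The main obstacle is the existential (and universal) quantifier step, specifically the bookkeeping showing that restriction to $V$ commutes with the supplementation $X[F/x]$ when $x$ itself may be free in $\psi$. The subtlety is that one must enlarge the free-variable set to $V\cup\{x\}$ before invoking the induction hypothesis, and then check that the team obtained by first supplementing and then restricting agrees with the team obtained by first restricting and then supplementing along a correspondingly adjusted choice function $F'$. Once this commutation is verified, the rest of the argument is routine. I expect no difficulty arising from the presence of the generalized atom, since its locality is built into the definition of $A_Q$ via $\mathrm{rel}$; the novelty of the atom is entirely absorbed by the single base case above.
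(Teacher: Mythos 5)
Your proposal is correct and follows essentially the same route as the paper, which handles the generalized-atom case by noting that $\mathrm{rel}(X,\overline{y}_i)=\mathrm{rel}(X(V),\overline{y}_i)$ holds by definition and otherwise defers to the standard structural induction from the inclusion-logic literature. Your explicit treatment of the quantifier cases---enlarging the variable set to $V\cup\{x\}$ and transporting the choice function $F$ to $F'$ on $X(V)$ (which works precisely because the semantics is lax, so $F'(s')=\bigcup\{F(s)\mid s\upharpoonright V=s'\}$ remains a legitimate nonempty-set-valued witness)---is exactly the bookkeeping that proof requires.
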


The next proposition is also very useful.
%
The proof is almost identical to the corresponding proof for dependence logic, see \cite{va07}. The additional case for generalized atoms follows by the assumption of downwards closure.
\begin{proposition}[Downward closure]\label{Downward closure}
Let $A_Q$ be a downwards closed generalized atom.  Suppose $\varphi$ is an $\fo(A_Q)$-formula,  
$\mA$  a model, and $Y\subseteq X$ teams. Then $\mA\models_X \varphi$ implies $\mA\models_Y\varphi$.
\end{proposition}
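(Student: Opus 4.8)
The plan is to prove the statement by induction on the structure of the formula $\varphi$, following the standard downward closure argument for dependence logic in \cite{va07}, and inserting one new inductive case for the generalized atom $A_Q$. So first I would fix a model $\mA$ and teams $Y \subseteq X$, assume $\mA \models_X \varphi$, and aim to show $\mA \models_Y \varphi$. The induction hypothesis is that the claim holds for all proper subformulas of $\varphi$, for every pair of nested teams.

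For the base cases, I would treat first-order literals via Proposition \ref{whatever}: if $\mA \models_X \psi$ for a literal $\psi$ then $\mA, s \models_{\fo} \psi$ for all $s \in X$, hence in particular for all $s \in Y$, so $\mA \models_Y \psi$. The genuinely new base case is the generalized atom $A_Q(\overline{y}_1,\dots,\overline{y}_k)$: here I would observe that $\rel(Y,\overline{y}_i) \subseteq \rel(X,\overline{y}_i)$ for each $i$ since $Y \subseteq X$, and then invoke the hypothesis that $A_Q$ is downwards closed to pass from satisfaction in $X$ to satisfaction in $Y$. For the inductive cases, conjunction is immediate from applying the induction hypothesis to both conjuncts. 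For disjunction $\psi \vee \chi$, a witnessing decomposition $X = X_1 \cup X_2$ with $\mA \models_{X_1} \psi$ and $\mA \models_{X_2} \chi$ yields the decomposition $Y = (X_1 \cap Y) \cup (X_2 \cap Y)$, and since $X_1 \cap Y \subseteq X_1$ and $X_2 \cap Y \subseteq X_2$, the induction hypothesis gives $\mA \models_{X_1 \cap Y} \psi$ and $\mA \models_{X_2 \cap Y} \chi$, hence $\mA \models_Y \psi \vee \chi$.

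The quantifier cases require slightly more care in bookkeeping the supplementing functions. For $\exists x\, \psi$, from $\mA \models_{X[F/x]} \psi$ for some $F\colon X \to (\mathcal{P}(A)\setminus\{\emptyset\})$ I would take the restriction $F' := F \upharpoonright Y$ and check that $Y[F'/x] \subseteq X[F/x]$; applying the induction hypothesis to this inclusion gives $\mA \models_{Y[F'/x]} \psi$ and therefore $\mA \models_Y \exists x\, \psi$. For $\forall x\, \psi$, from $\mA \models_{X[A/x]} \psi$ I note that $Y[A/x] \subseteq X[A/x]$ and apply the induction hypothesis directly. The main thing to verify carefully is that $Y \subseteq X$ implies the corresponding inclusions $Y[A/x] \subseteq X[A/x]$ and $Y[F'/x] \subseteq X[F'/x]$ after the quantifier steps; these follow straightforwardly from the definitions of $X[\,T/x\,]$ and $X[\,F/x\,]$, but they are the one place where one must confirm the nesting is preserved under the team-modification operators. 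The expected principal obstacle is therefore not conceptual but purely the disjunction and existential cases, where the witness decomposition and the supplementing function must be restricted to $Y$ in a way compatible with the induction hypothesis; once those restrictions are set up correctly the argument closes uniformly.
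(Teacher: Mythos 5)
Your proposal is correct and takes essentially the same route as the paper, which simply notes that the proof is the standard induction for dependence logic from \cite{va07} (literals via flatness, intersection decomposition for $\vee$, restriction of the supplementing function for $\exists$) with the one new atom case discharged directly by the team-based definition of a downwards closed atom, exactly as you do. The only blemish is the slip $Y[F'/x]\subseteq X[F'/x]$ near the end --- since $F'$ is undefined outside $Y$ this should read $Y[F'/x]\subseteq X[F/x]$, as you correctly stated earlier --- and, relatedly, the observation $\mathrm{rel}(Y,\overline{y}_i)\subseteq \mathrm{rel}(X,\overline{y}_i)$ is superfluous, because the paper's definition of downward closure for $A_Q$ applies to $Y\subseteq X$ directly; neither point affects the validity of the argument.
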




%
\section{Satisfiability problems of logics $\fotwo(\mathcal{A})$}\label{sec:satfo2}
%
%
In this section we show that for any finite collection 
$\mathcal{A}$ of\, $\Sigma_1^1(\foctwo)$-definable atoms $A_Q$,
 both $\sat[\fotwo(\mathcal{A})]$ and
 $\finsat[\fotwo(\mathcal{A})]$ are \NEXPTIME-complete. Our proof relies on a translation
from  $\fotwo(\mathcal{A})$ into $\Sigma_1^1(\foctwo)$ and the fact that \sat[\foctwo]
and \finsat[\foctwo] are \NEXPTIME-complete \cite{IEEEonedimensional:pratth}.
We start by establishing a more general translation.
We show that for every $k\geq 1$ and every $\Sigma_1^1(\foc^k)$ definable atom $A_Q$,
we have $\fo^k(A_Q)\leq \Sigma_1^1(\foc^k)$. Note that strictly speaking $\fo^k(A_Q)$
uses only one atom $A_Q$ instead of a finite collection $\mathcal{A}$ of atoms, but our proof
below generalizes \emph{directly} to the case with a finite collection of atoms.
The reason for considering a single atom is simply to keep the notation light.
When considering $k$-variable logic, we let $\{x_1,...,x_k\}$ denote the $k$ distinct
variables used in the syntax of the logic, and we let $\mathit{rel}(X)$ denote $\mathit{rel}\bigl(X,(x_1,...,x_k)\bigr)$.
%
%
%
%
%
%
%
%
%
The following lemma is possibly the technically most 
involved part of our argument in this section for establishing
decidability of two-variable inclusion and independence logics.
The proof significantly modifies and
extends the argument establishing
Lemma~3.3.14 of \cite{jonnithesis}. See also \cite{kokulovi2}
and Theorem 6.2 in \cite{va07}.
\begin{lemma}\label{DtoESO}
Assume that $k,t\geq 1$. Let $\tau$ be a relational vocabulary, let $R\not\in\tau$ be a $k$-ary relation symbol and let $A_Q$ be a $\Sigma_1^1(\foc^k)$-definable atom of type $(i_1,\dots,i_t)$, where $i_j\leq k$ for each $j$. For every formula $\varphi\in \fo^k(A_Q)$ there exists a sentence 
$
\varphi^*\in \Sigma_1^1(\foc^k)(\tau\cup\{R\})
$
such that
for every model $\mA$ and team $X$ with codomain $A$ and $\dom(X)=\{x_1,\dots,x_k\}$, we have
\begin{equation}\label{dtoesoequiv}
\mA\models_X \varphi \quad\text{ iff }\quad \big(\mA,\rel[X]\big)\models \varphi^*,
\end{equation}
where $(\mA,\rel[X])$ is the expansion $\mA'$ of\, $\mA$ into the vocabulary $\tau\cup\{R\}$ such that
$R^{\mA'} := \rel[X]$. Moreover $\varphi^*$ is computable from $\varphi$ in polynomial time.
%
%
%
\end{lemma}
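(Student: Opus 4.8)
The plan is to proceed by induction on the structure of $\varphi \in \fo^k(A_Q)$, building the sentence $\varphi^*$ so that the distinguished relation symbol $R$ always encodes the current team $X$ via $R := \rel[X]$. The central idea is that team semantics for $k$-variable logic can be simulated in second-order logic by quantifying existentially over the relations that witness the semantic clauses (splits for disjunction, choice functions for existential quantification), while using the power of counting quantifiers in $\foc^k$ to manage the bookkeeping that makes the recursion go through with only $k$ variables.

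First I would set up the translation clause-by-clause. For a first-order literal $\psi(\bar x)$, by Proposition~\ref{whatever} we have $\mA \models_X \psi$ iff every assignment in $X$ satisfies $\psi$, so $\psi^*$ asserts $\forall x_1 \cdots \forall x_k\,\big(R(x_1,\dots,x_k) \to \psi\big)$ relativised to the tuples recorded by $R$. For the generalized atom $A_Q(\bar x_1,\dots,\bar x_t)$, I would invoke the hypothesis that $A_Q$ is $\Sigma_1^1(\foc^k)$-definable: there is a sentence $\theta(R_1,\dots,R_t)$ such that $\mA \models_X A_Q(\bar x_1,\dots,\bar x_t)$ iff $(\mA, R_j := \rel(X,\bar x_j))_{j\le t} \models \theta$. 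Since each $\bar x_j$ is a subtuple of $(x_1,\dots,x_k)$, each $\rel(X,\bar x_j)$ is first-order definable from $R$ (by projecting onto the relevant coordinates using $\foc^k$-formulas), so I substitute these definitions into $\theta$ to obtain the atom's translation. The conjunction case is immediate: $(\psi \wedge \varphi)^* := \psi^* \wedge \varphi^*$, since both conjuncts are evaluated against the same team $R$.

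The genuinely delicate cases are disjunction, existential quantification, and universal quantification, and \textbf{the main obstacle} is carrying them out within the counting two-variable (more generally $k$-variable) framework without introducing fresh free variables that we cannot afford. For disjunction $(\psi \vee \varphi)$, the clause requires a split $X = Y \cup Z$; I would existentially quantify two new $k$-ary relation symbols $S, S'$ standing for $\rel[Y], \rel[Z]$, assert $\forall \bar x\,(R(\bar x) \leftrightarrow (S(\bar x) \vee S'(\bar x)))$, and recursively relativise $\psi^*$ to $S$ and $\varphi^*$ to $S'$. This relativisation means each subformula's translation must be parameterised by \emph{which} relation symbol plays the role of $R$; so I would actually prove the lemma for the relativised version $\varphi^*_S$ and then take $\varphi^* := \varphi^*_R$. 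For existential quantification $\exists x_i\,\psi$, the lax semantics uses a function $F : X \to (\mathcal{P}(A)\setminus\{\emptyset\})$, which corresponds at the level of relations to choosing a new $k$-ary relation $S'$ such that $S'$ projects onto $S$ in every coordinate except $i$ (witnessing that each old assignment is extended nonemptily) and $S$ is recoverable from $S'$ by forgetting coordinate $i$ --- both of which are expressible by $\foc^k$-sentences over $S$ and $S'$. The universal case $\forall x_i\,\psi$ is simpler: $S'$ is forced to be the cylindrification of $S$ over coordinate $i$ (ranging over all of $A$), again a first-order condition. In each quantifier case the existential second-order quantifier over $S'$ is pulled to the front, which is legitimate since the whole sentence lives in $\Sigma_1^1(\foc^k)$.

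Finally I would verify the two global requirements. Pulling all the existential second-order quantifiers introduced along the recursion out to the front keeps $\varphi^*$ in the prenex form $\exists S_1 \cdots \exists S_m\,\chi$ with $\chi \in \foc^k$, so $\varphi^* \in \Sigma_1^1(\foc^k)(\tau \cup \{R\})$ as required; here I rely on the fact that only the fixed variables $x_1,\dots,x_k$ are ever used as first-order variables, and the counting quantifiers of $\foc^k$ suffice to express projection and cylindrification conditions that in plain $\fo^k$ might need more variables. The polynomial-time computability and the equivalence \eqref{dtoesoequiv} both follow by a straightforward induction: each connective adds only a constant number of new relation symbols and a $\foc^k$-formula of size linear in the local data, so $|\varphi^*| = O(|\varphi|)$ up to the arity factor, and the inductive step for each clause is exactly the matching semantic clause of Definition~\ref{def:semantics} read through the correspondence $S \leftrightarrow \rel[\,\cdot\,]$. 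The base case for the atom $A_Q$ is where the definability hypothesis is consumed, and the quantifier cases are where the restriction to $k$ variables is genuinely exploited.
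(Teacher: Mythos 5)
Your translation agrees with the paper's almost clause for clause: the invariant $R=\rel[X]$, the clause $\forall x_1\dots\forall x_k(R(x_1,\dots,x_k)\rightarrow\varphi)$ for literals, the split $\forall \overline{x}\,(R\leftrightarrow S\vee T)$ with relativised recursive calls $\psi'(S/R)$, $\vartheta'(T/R)$ for disjunction, plain conjunction of translations, cylindrification for $\forall x_i$, prenexing of the second-order quantifiers, and the linear size bound are all exactly what the paper does. (Your description of the $\exists x_i$ case is loosely worded---``$S$ is recoverable from $S'$ by forgetting coordinate $i$''---but your two inclusions amount to the paper's single condition $\forall x_1\dots\forall x_k(\exists x_i R\leftrightarrow \exists x_i S)$, equality of the projections that forget coordinate $i$, which is the correct rendering of lax semantics.)

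The one place you genuinely diverge is the generalized-atom case, and it is precisely the step the paper flags as the technically most involved. You propose to observe that each projection $\rel(X,\overline{y}_j)$ is $\foc^k$-definable from $R$ and to substitute these definitions for the atoms $R_j(z_1,\dots,z_{i_j})$ inside the defining sentence $\varphi_{A_Q}$. As stated this step can fail the $k$-variable budget: to say that $(z_1,\dots,z_{i_j})$ lies in the projection of the $k$-ary $R$ onto the coordinates named by $\overline{y}_j$, you must existentially quantify the remaining coordinates of $R$ by \emph{pairwise distinct} variables not occurring among $z_1,\dots,z_{i_j}$ (reusing one variable for two coordinates would wrongly force them equal). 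When $\overline{y}_j$ contains repetitions while the argument tuple $z_1,\dots,z_{i_j}$ consists of distinct variables, there are not enough spare names: e.g.\ $k=3$, $\overline{y}_j=(x_1,x_1)$, atom occurrence $R_j(x_2,x_3)$ leaves only $x_1$ free while two coordinates of $R$ need quantifying. The substitution can be repaired---the repetitions in $\overline{y}_j$ force identities among the $z_l$, and after asserting those equalities one may requantify the now-redundant argument variables, freeing exactly enough names---but your proposal asserts definability without noticing this, and the paper avoids the issue altogether by a different mechanism: it existentially quantifies fresh \emph{padded} $k$-ary relations $T_j$, rewrites each $R_j(z_1,\dots,z_{i_j})$ as $T_j(z_1,\dots,z_{i_j},z_1,\dots,z_1)$, and ties $T_j$ to $R$ by the axioms $\varphi_{j\text{-}\mathit{padding}}$, $\varphi_{j\text{-}\mathit{identities}}$ and $\chi_j$. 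So: same global architecture, but your atom case needs the variable-reuse argument spelled out (or replaced by the paper's padding) before the proof is complete.
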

\begin{proof}
Fix $k\geq 1$ and the $\Sigma_1^1(\foc^k)$-definable atom
$A_Q$. Let $(i_1,\dots,i_t)$, where $i_j\leq k$ for each $j$, be the type of $A_Q$.
Let $\varphi_{A_Q}(R_1,\dots,R_t)$ be the $\Sigma_1^1(\foc^k)$-sentence that
defines $A_Q$. We will define a translation
\[
tr_k:\fo^k(A_Q)(\tau)\ \rightarrow\ \Sigma_1^1(\foc^k)(\tau\cup\{R\})
\]
inductively. Below we always assume that the quantified relations $S$ and $T$ are fresh, i.e., they are assumed not to appear in
$tr_k(\psi)$ or $tr_k(\vartheta)$. Notice that for every $\fo^k(A_Q)$-formula $\varphi$,
we have $tr_k(\varphi)=\exists S_1\dots \exists S_n\varphi'$
for some $k$-ary relation variables $S_1\dots S_n$ ($n\in\mathbb{N}$) and some
$\foc^k$-formula $\varphi'$. The translation
$tr_k$ is defined as follows.
\begin{enumerate}
\item If $\varphi$ is a first-order literal (and thus not a generalized atom), then
\[tr_k(\varphi) :=
\forall x_1\dots\forall x_k\big( R(x_1,\dots,x_k)\rightarrow \varphi\, \big).
\]
%
%
\item Assume that $\varphi$ is a
generalized atom $A_Q(\overline{y}_1,\dots,\overline{y}_t)$,
where $\overline{y}_j\in \{x_1,\dots,x_k\}^{i_j}$ for each $j\leq t$.
Let $\overline{Y}$ and $\psi\in \foc^k(R_1,...,R_t)$ be such that
\(
 \varphi_{A_Q} = \overline{\exists Y}\psi.
\)
For technical reasons, we will simulate $i_j$-ary relations by $k$-ary relations.
Define that, for each $j\leq t$,
$$\mathrm{Id}_j:=\{(l,m)\in \mathbb{N}^2 \mid y_{j_l} \text{ and } y_{j_m} \text{ denote the same variable symbol} \},$$
where $y_{j_l}$ ($y_{j_m}$) denotes the $l$-th ($m$-th) element of $\overline{y}_j$.
Now $tr_k(\varphi)$ is defined to be the formula
\begin{align*}
\overline{\exists Y}\,\exists T_1\dots\exists T_t  \Big(\bigwedge_{j\leq t} \Big(\varphi_{j\text{-}\mathit{padding}} \, \wedge \, \varphi_{j\text{-}\mathit{identities}}\Big)
\, \wedge\, \psi'\, \Big),
\end{align*}
where the relation variables $T_j$ and formulas $\varphi_{j\text{-}\mathit{padding}}$, $\varphi_{j\text{-}\mathit{identities}}$ and $\psi'$ are defined as follows.
Each variable $T_j$ is a fresh \emph{$k$-ary} relation variable. 
The formula $\psi'$ is the conjunction $\psi''\wedge\bigwedge_{j\leq t}\chi_j$,
where $\psi''$ and $\chi_j$ are as follows.
The conjunct $\psi''$ is obtained from $\psi$ by replacing each atomic formula $R_j(z_1,\dots,z_{i_j})$ by
$T_j(z_1,\dots,z_{i_j},z_1,\dots,z_1)$.
For each $j\leq t$, $\chi_j$ is the formula
\begin{multline*}
\forall x_1\dots\forall x_k\Big(
\exists x_{i_{j+1}}\, T_j(x_1,...,x_{i_j},x_{i_{j+1}},...,x_{i_{j+1}})\\ \rightarrow \forall x_{i_{j+1}}\, T_j(x_1,...,x_{i_j},x_{i_{j+1}},...,x_{i_{j+1}})\Big),
\end{multline*}
where in the
case $i_j = k$ the formulas $\exists x_{i_{j+1}}\, T_j(x_1,...,x_{i_j},x_{i_{j+1}},...,x_{i_{j+1}})$
and $$\forall x_{i_{j+1}}\, T_j(x_1,...,x_{i_j},x_{i_{j+1}},...,x_{i_{j+1}})$$
are replaced by $T_j(x_1,...,x_k)$.
The formula $\varphi_{j\text{-}\mathit{identities}}$ is
\begin{align*}
\forall x_1\dots\forall x_k\Big(T_j(x_1,\dots, x_{k})
\rightarrow \big(\bigwedge_{(l,m)\in \mathrm{Id}_j} (x_l=x_m)
\wedge \bigwedge_{l,m>i_j} x_l=x_m \big)\Big).
\end{align*}
The formula $\varphi_{j\text{-}\mathit{padding}}$ is the formula
\begin{align*}
\forall x_1\dots\forall x_k\Big(&
\big(R(x_1,\dots,x_k)\rightarrow \forall x_{m_j} T_j(\overline{y}_j, x_{m_j},\dots,x_{m_j})\big)\\
&\wedge \big(\exists x_{m_j} T_j(\overline{y}_j, x_{m_j},\dots,x_{m_j}) \rightarrow  \exists \overline{z}_j R(x_1,\dots,x_k)\big)\Big),
\end{align*}

where $\overline{z}_j$ is the tuple of variables in $(x_1,\dots,x_k)$ but not in $\overline{y}_j$, and ${m_j}\leq k$ is the smallest
integer such that the variable $x_{m_j}$ does not occur in the tuple $\overline{y}_j$; in the case that such variable does not exist the formulas
$\forall x_{m_j} T_j(\overline{y}_j, x_{m_j},\dots,x_{m_j})$ and $\exists x_{m_j} T_j(\overline{y}_j, x_{m_j},\dots,x_{m_j})$ are replaced by $T_j(\overline{y}_j)$.

\item Assume that $tr_k(\psi)=\exists S_1\dots \exists S_n \psi'$ and $tr_k(\vartheta)=\exists T_1\dots \exists T_m\vartheta'$, where $\psi'$ and $\vartheta'$ are $\foc^k$-formulas. Furthermore, assume that the relation variables $S_1,\dots S_n$, $T_1,\dots,T_m$ are all distinct.
\begin{enumerate}
\item If $\varphi$ is of the form $(\psi\vee \vartheta)$, then $tr_k(\varphi)$ is defined to be the formula
\begin{align*}
\exists S\exists T&\exists S_1\dots\exists S_n \exists T_1\dots \exists T_m\Big(\forall x_1\dots \forall x_k \Big(R(x_1,\dots,x_k)\\
&\leftrightarrow \big(S(x_1,\dots,x_k)\vee T(x_1,\dots,x_k)\big)\Big)
\wedge\psi'(S/R)\wedge \vartheta'(T/R)\Big),
\end{align*}
where $\psi'(S/R)$ denotes the formula obtained from $\psi'$ by replacing occurrences of $R$ by $S$,
and analogously for $\vartheta'(T/R)$.
\item If $\varphi=(\psi\wedge \vartheta)$, then $tr_k(\varphi)$ is the formula
$
\exists S_1\dots\exists S_n \exists T_1\dots \exists T_m\big(\psi'\wedge \vartheta'\big).
$
\end{enumerate}
\item If $\varphi$ is of the form $\exists x_i\psi$ and $tr_k(\psi)=\exists S_1\dots\exists S_n \psi'$,
where $\psi'$ is an $\foc^k$-formula, then $tr_k(\varphi)$ is the formula
%
%
%
%
$$\exists S\exists S_1\dots\exists S_n\Big(\forall x_1\dots\forall x_k\big(\exists x_i R(x_1,\dots,x_k)
\leftrightarrow \exists x_i S(x_1,\dots,x_k)\big)
\wedge \psi'(S/R)\Big).$$
%
%
%
%
%
\item If $\varphi$ is of the form $\forall x_i\psi$ and $tr_k(\psi)=\exists S_1\dots\exists S_n \psi'$, where $\psi'$ is an $\foc^k$-formula, then $tr_k(\varphi)$ is defined to be the formula
\begin{align*}
\exists S \exists S_1\dots \exists S_n\Big(&\forall x_1\dots\forall x_k\Big(\big(R(x_1,\dots,x_k)\rightarrow \forall x_i S(x_1,\dots,x_k)\big)\\
&\wedge\big(S(x_1,\dots,x_k)\rightarrow \exists x_i R(x_1,\dots,x_k)\big)\Big)
\wedge \psi'(S/R)\Big).
\end{align*}
\end{enumerate}
A straightforward induction on $\varphi$ shows that for every model $\mA$ and every team
with codomain $A$ such that
$\dom(X)=\{x_1,\dots,x_k\}$,
$\mA\models_X \varphi$ iff $\big(\mA,\rel[X]\big)\models \,tr_k(\varphi)$.
\end{proof}

\begin{theorem}\label{AQtoESOs}
For every $k\geq 1$ and for every $\Sigma_1^1(\foc^k)$-definable atom $A_Q$ it holds that $\fo^k(A_Q)\leq \Sigma^1_1(\foc^k)$, i.e., for every sentence of\, $\fo^k(A_Q)$, there exists an
equivalent sentence of $\Sigma^1_1(\foc^k)$.
\end{theorem}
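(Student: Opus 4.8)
The plan is to derive the theorem as an almost immediate consequence of Lemma~\ref{DtoESO}. That lemma already supplies, for each formula $\varphi\in\fo^k(A_Q)$ over a relational vocabulary $\tau$, a sentence $\varphi^*\in\Sigma^1_1(\foc^k)(\tau\cup\{R\})$ satisfying $\mA\models_X\varphi$ iff $(\mA,\rel[X])\models\varphi^*$ for every team $X$ with $\dom(X)=\{x_1,\dots,x_k\}$. Two gaps separate this from the statement of the theorem. First, the truth of a \emph{sentence} $\varphi$ is defined via the team $\{\emptyset\}$, whose domain is empty rather than $\{x_1,\dots,x_k\}$, so Lemma~\ref{DtoESO} does not apply to it directly. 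Second, $\varphi^*$ lives over the extended vocabulary $\tau\cup\{R\}$, whereas we want an equivalent sentence over $\tau$ alone. I would close both gaps in turn.

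For the first gap, I would pass to the full team $X_A:=\{\,s\colon\{x_1,\dots,x_k\}\to A\,\}$, which is nonempty because $A\neq\emptyset$. Since $\varphi$ is a sentence we have $\fr(\varphi)=\emptyset$, so Proposition~\ref{freevar} (Locality), applied with $V=\emptyset$, yields $\mA\models_{X_A}\varphi$ iff $\mA\models_{X_A(\emptyset)}\varphi$; as $X_A$ is nonempty, $X_A(\emptyset)=\{\emptyset\}$, whence $\mA\models_{X_A}\varphi$ iff $\mA\models_{\{\emptyset\}}\varphi$ iff $\mA\models\varphi$. Moreover $\rel[X_A]$ is precisely the full $k$-ary relation $A^k$. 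Substituting $X=X_A$ into Lemma~\ref{DtoESO} therefore gives the clean equivalence $\mA\models\varphi$ iff $(\mA,A^k)\models\varphi^*$.

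For the second gap, I would observe that in this equivalence $R$ is always interpreted as the full relation $A^k$, so its interpretation is uniquely pinned down and can be quantified away. Concretely, I would set
\[
\varphi^{**}\ :=\ \exists R\,\Big(\forall x_1\cdots\forall x_k\,R(x_1,\dots,x_k)\ \wedge\ \varphi^*\Big),
\]
and then move the existential relation quantifiers already occurring in $\varphi^*$ to the front, so that $\varphi^{**}$ is syntactically a sentence of $\Sigma^1_1(\foc^k)(\tau)$. The conjunct $\forall x_1\cdots\forall x_k\,R(x_1,\dots,x_k)$ forces the interpretation of $R$ to be $A^k$, hence $\mA\models\varphi^{**}$ iff $(\mA,A^k)\models\varphi^*$ iff $\mA\models\varphi$, which is exactly the required equivalence. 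Since $\varphi^*$ is computable from $\varphi$ in polynomial time by Lemma~\ref{DtoESO} and the extra wrapping is trivial, the whole construction stays polynomial, although the theorem as stated only asks for the existence of an equivalent sentence.

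The only real subtlety, and the step I would be most careful about, is the first gap: correctly matching the sentence-level semantics at $\{\emptyset\}$ with the $\{x_1,\dots,x_k\}$-domain teams demanded by Lemma~\ref{DtoESO}. Everything hinges on choosing a \emph{nonempty} team, so that restricting to the empty domain yields $\{\emptyset\}$ rather than $\emptyset$, whose induced relation $\rel[X]$ is simultaneously \emph{definable} inside $\Sigma^1_1(\foc^k)$. The full team is the natural choice, precisely because $A^k$ is the unique relation cut out by $\forall x_1\cdots\forall x_k\,R(x_1,\dots,x_k)$; this is what lets us eliminate $R$ without introducing any new nondeterminism. The remaining manipulations are routine.
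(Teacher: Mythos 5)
Your proposal is correct and follows essentially the same route as the paper: both derive the theorem directly from Lemma~\ref{DtoESO}, using locality (Proposition~\ref{freevar}) together with $\fr(\varphi)=\emptyset$ to bridge the sentence semantics at $\{\emptyset\}$ with teams of domain $\{x_1,\dots,x_k\}$, and then quantify $R$ away existentially. The only divergence is cosmetic: the paper lets $R$ range over all \emph{nonempty} relations (adding the conjunct $\exists x_1\dots\exists x_k\,R(x_1,\dots,x_k)$), whereas you pin $R$ to the full relation $A^k$ via $\forall x_1\dots\forall x_k\,R(x_1,\dots,x_k)$; both wrappers are sound, since by locality all nonempty teams with domain $\{x_1,\dots,x_k\}$ agree on sentences.
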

\begin{proof}
Let $\tau$ be a relational vocabulary, $k\geq 1$, and $A_Q$ a $\Sigma_1^1(\foc^k)$-definable atom. Let $\varphi$ be an $\fo^k(A_Q)(\tau)$-sentence and
$\varphi^*=\exists R_1\dots \exists R_n \psi$
the related $\Sigma_1^1(\foc^k)(\tau\cup\{R\})$-sentence given by Lemma \ref{DtoESO}. The following conditions are equivalent.
\begin{enumerate}
\item $\mA\models \varphi$.\label{deso1}
\item $\mA\models_X \varphi$ for some nonempty team $X$ such that $\dom(X)=\{x_1,\dots,x_k\}$.\label{deso2}
\item $\big(\mA,\rel[X]\big)\models \varphi^*$ for some nonempty team $X$ such that $\dom(X)=\{x_1,\dots,x_k\}$.\label{deso3}
\item $(\mA, R)\models \exists R_1\dots \exists R_n \big(\exists x_1\dots\exists x_k R(x_1,\dots,x_k)\wedge\psi\big)$ for some $R\subseteq A^k$.\label{deso4}
\item $\mA\models \exists R\exists R_1\dots \exists R_n \big(\exists x_1\dots\exists x_k R(x_1,\dots,x_k)\wedge\psi\big)$.\label{deso5}
\end{enumerate}
The equivalence of \ref{deso1} and \ref{deso2} follows from Proposition \ref{freevar} and the fact that $\fr(\varphi)=\emptyset$. By Lemma \ref{DtoESO}, conditions \ref{deso2} and \ref{deso3} are equivalent. The equivalence of \ref{deso3} and \ref{deso4} follows from the fact that $\varphi^*=\exists R_1\dots \exists R_n \psi$.
The conditions \ref{deso4} are \ref{deso5} clearly equivalent.
\end{proof}

\begin{theorem}\label{AQtwo nexptime}
Let $A_Q$ be a $\Sigma_1^1(\foc^2)$-definable generalized atom. Then the problems
$\sat[\fotwo(A_Q)]$ and $\finsat[\fotwo(A_Q)]$ are \NEXPTIME-complete.
\end{theorem}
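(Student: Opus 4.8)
The plan is to reduce both problems to the corresponding problems for $\foctwo$, which are known to be \NEXPTIME-complete by \cite{IEEEonedimensional:pratth}, and to obtain the matching \NEXPTIME-hardness from the two-variable fragment of ordinary first-order logic. Since essentially all of the technical machinery has already been discharged in Theorem \ref{AQtoESOs} and Lemma \ref{DtoESO}, the argument should be short.

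For the upper bounds, I would first invoke Theorem \ref{AQtoESOs} to translate a given $\fotwo(A_Q)$-sentence $\varphi$ into an equivalent $\Sigma_1^1(\foctwo)$-sentence $\varphi' = \exists R_1\dots\exists R_n\,\psi$, where $\psi\in\foctwo$ over the vocabulary expanded by the fresh relation symbols $R_1,\dots,R_n$. Inspecting the construction, namely the fixed prefix prepended in the proof of Theorem \ref{AQtoESOs} together with the translation $tr_k$ of Lemma \ref{DtoESO}, one sees that $\varphi'$ is computable from $\varphi$ in polynomial time. The key observation is then that the $\Sigma_1^1(\foctwo)$-sentence $\exists R_1\dots\exists R_n\,\psi$ has a (finite) model if and only if the $\foctwo$-formula $\psi$, now read classically over the expanded vocabulary $\tau\cup\{R_1,\dots,R_n\}$, has a (finite) model: a witnessing interpretation of the symbols $R_i$ is exactly a witness for the existential second-order quantifiers, and this correspondence fixes the underlying structure and hence preserves its cardinality. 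Composing, $\varphi\mapsto\psi$ is a polynomial-time reduction from $\sat[\fotwo(A_Q)]$ to $\sat[\foctwo]$ and from $\finsat[\fotwo(A_Q)]$ to $\finsat[\foctwo]$, placing both problems in \NEXPTIME.

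For the lower bounds, I would observe that every $\fotwo$-sentence is syntactically already an $\fotwo(A_Q)$-sentence, and that by Proposition \ref{whatever} the team-semantic truth of such a sentence in a model (evaluated on the singleton team $\{\emptyset\}$) coincides with its ordinary Tarskian truth. Hence the identity map is a reduction from $\sat[\fotwo]$ to $\sat[\fotwo(A_Q)]$, and likewise for finite satisfiability; as $\sat[\fotwo]$ and $\finsat[\fotwo]$ are \NEXPTIME-hard \cite{grkova97}, so are $\sat[\fotwo(A_Q)]$ and $\finsat[\fotwo(A_Q)]$.

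Since the substantial content, the semantics-preserving translation and its polynomial size bound, is entirely carried by Lemma \ref{DtoESO} and Theorem \ref{AQtoESOs}, the hard part of this theorem is already behind us. The only points that require care are the (routine) verification that deleting the existential second-order prefix preserves both satisfiability and finite satisfiability, and the check that the whole chain of translations stays polynomial; beyond this bookkeeping I do not expect any genuine obstacle.
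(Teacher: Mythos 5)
Your proposal is correct and takes essentially the same route as the paper: the upper bound comes from the polynomial-time translation of Lemma \ref{DtoESO}/Theorem \ref{AQtoESOs} into $\Sigma_1^1(\foctwo)$ together with the \NEXPTIME{} bound for (finite) satisfiability from \cite{IEEEonedimensional:pratth}, and the lower bound from $\fotwo\le\fotwo(A_Q)$ via Proposition \ref{whatever} and the \NEXPTIME-hardness of $\sat[\fotwo]$ and $\finsat[\fotwo]$ \cite{grkova97}. Your only addition is to spell out the (correct, routine) observation that deleting the existential second-order prefix preserves both satisfiability and finite satisfiability, a step the paper leaves implicit.
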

\begin{proof}
Since the translation $\varphi\mapsto \varphi^*$ is
computable in polynomial time and (finite) satisfiability of
$\Sigma_1^1(\foc^2)$ can be checked in \NEXPTIME \cite{IEEEonedimensional:pratth},
we conclude that both $\sat[\fotwo(A_Q)]$ and $\finsat[\fotwo(A_Q)]$ are in $\NEXPTIME$.
On the other hand, since $\fotwo\le \fotwo(A_Q) $ by Proposition \ref{whatever},
and since both $\sat[\fotwo]$
and $\finsat[\fotwo]$ are \NEXPTIME-hard
\cite{grkova97}, it follows that both \sat[\fotwo(A_Q)] and also \finsat[\fotwo(A_Q)] are as well.
\end{proof}
The result of Theorem \ref{AQtwo nexptime} can be directly generalized
to concern finite collections $\mathcal{A}$ of generalized atoms.
The proof of the following theorem is practically the same as that of Theorem \ref{AQtwo nexptime}.
\begin{theorem}\label{classNEXPTIME}
Let $\mathcal{A}$ be a finite collection of\, $\Sigma_1^1(\foctwo)$-definable generalized atoms.
%
%
%
The satisfiability and the finite satisfiability problems of\, $\fotwo(\mathcal{A})$ are $\NEXPTIME$-complete.
\end{theorem}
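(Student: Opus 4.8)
The plan is to reduce everything to Theorem \ref{AQtwo nexptime} by upgrading the translation of Lemma \ref{DtoESO} so that it accepts formulas mentioning several distinct atoms from $\mathcal{A}$ instead of occurrences of a single fixed atom $A_Q$. First I would extend the inductive translation $tr_2$ so that its generalized-atom case (case 2) branches on which atom $A_Q\in\mathcal{A}$ is being translated: for an occurrence of $A_Q(\overline{y}_1,\dots,\overline{y}_t)$ I plug in the fixed $\Sigma_1^1(\foctwo)$-sentence $\varphi_{A_Q}$ defining that particular atom, together with the associated padding and identity gadgets built from it. Since being $\Sigma_1^1(\foctwo)$-definable forces every atom in $\mathcal{A}$ to have a type $(i_1,\dots,i_t)$ with all $i_j\le 2$, each such gadget is a legitimate $\Sigma_1^1(\foctwo)$-formula, and the literal, conjunction, disjunction, and quantifier cases of $tr_2$ are left completely untouched. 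A straightforward induction---identical to the one in Lemma \ref{DtoESO}, except that the atom step now appeals to the defining sentence of whichever atom is actually encountered---then yields $\mA\models_X\varphi$ iff $(\mA,\rel[X])\models tr_2(\varphi)$, and, exactly as in Theorem \ref{AQtoESOs}, every $\fotwo(\mathcal{A})$-sentence becomes equivalent to a $\Sigma_1^1(\foctwo)$-sentence.

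The one point genuinely requiring the hypothesis that $\mathcal{A}$ is \emph{finite} is that this extended translation stays computable in polynomial time. Because $\mathcal{A}$ is fixed and finite, the defining sentences $\{\varphi_{A_Q}\mid A_Q\in\mathcal{A}\}$, together with their padding and identity gadgets, constitute a finite set of constant-size formulas; each atom occurrence in the input therefore contributes only a constant amount of additional text, so the overall blow-up remains polynomial, exactly as in the single-atom case. I would remark explicitly that none of the inductive cases of $tr_2$ ever relied on there being a single atom, so no new size estimates are needed beyond the observation that a bounded number of bounded-size gadgets are spliced in.

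Granting the extended translation, the theorem follows precisely as Theorem \ref{AQtwo nexptime} does. For membership, composing the polynomial-time translation with the $\NEXPTIME$ decision procedure for (finite) satisfiability of $\Sigma_1^1(\foctwo)$ from \cite{IEEEonedimensional:pratth} places both $\sat[\fotwo(\mathcal{A})]$ and $\finsat[\fotwo(\mathcal{A})]$ in $\NEXPTIME$. For the matching lower bound, every $\fotwo$-sentence is trivially an $\fotwo(\mathcal{A})$-sentence whose team semantics coincides with ordinary Tarskian truth by Proposition \ref{whatever}, so $\fotwo\le\fotwo(\mathcal{A})$; since $\sat[\fotwo]$ and $\finsat[\fotwo]$ are $\NEXPTIME$-hard \cite{grkova97}, hardness transfers and $\NEXPTIME$-completeness follows.

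The hard part is really just bookkeeping in the atom case: I would take care that atoms of different types with different defining sentences are handled uniformly, and that the fresh second-order variables introduced for distinct atom occurrences are kept pairwise disjoint, so that both the induction and the polynomial size bound go through unchanged. Because $\mathcal{A}$ is finite this causes no difficulty, which is exactly why the argument is ``practically the same'' as that of Theorem \ref{AQtwo nexptime}.
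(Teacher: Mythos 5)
Your proposal is correct and follows essentially the same route as the paper: the authors explicitly note that the translation of Lemma \ref{DtoESO} ``generalizes directly'' to finite collections of atoms and that the proof of Theorem \ref{classNEXPTIME} is ``practically the same'' as that of Theorem \ref{AQtwo nexptime}, which is precisely the extension of $tr_2$ to a case split over the atoms of $\mathcal{A}$, the finiteness-based polynomial size bound, and the upper/lower-bound argument you spell out. Your write-up merely makes explicit the bookkeeping (per-atom defining sentences and gadgets, disjoint fresh relation variables) that the paper leaves implicit.
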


We shall next make use of Theorem \ref{classNEXPTIME} in order to show that the satisfiability
and the finite satisfiability problems of two-variable fragments of dependence logic,
inclusion logic, exclusion logic and independence
 logic are $\NEXPTIME$-complete. The result for two-variable dependence logic was already established in
\cite{kokulovi2}. Note that when regarded as generalized atoms,
each of the dependency notions above correspond to a collection of generalized atoms; for example the 
atomic formulas $\dep[x,y]$ and $\dep[x,y,z]$ refer to two different atoms,
one of type $(2)$ and the other of type $(3)$.
However, in order to capture the two-variable fragments of of these logics, we only need a finite number of
generalized atoms for each logic, as we shall see.
We define $\varphi_{const}:= \exists^{\leq 1} x R(x)$, $\varphi_{dep} := \forall x \exists^{\leq 1} y R(x,y)$,
$
\varphi_{inc} := \forall x \forall y \big(R(x,y)\rightarrow S(x,y)\big), \quad \varphi_{exc} := \forall x \forall y \big(R(x,y)\rightarrow \neg S(x,y)\big),
\varphi_{ind} := \forall x \forall y \big((\exists y R(x,y) \wedge \exists x R(x,y)) \rightarrow R(x,y)\big).
$

The formulas $\varphi_{const}$, $\varphi_{dep}$, $\varphi_{inc}$, $\varphi_{exc}$ and $\varphi_{ind}$  define the generalized atoms $A_{const}$ of type $(1)$, $A_{dep}$ of type $(2)$, $A_{inc}$ of type $(2,2)$, $A_{exc}$ of type $(2,2)$, and $A_{ind}$ of type $(2)$, respectively. 

%
%
%

\begin{theorem}\label{some}
The satisfiability and finite satisfiability problems of 
the two-variable fragments of dependence logic,
inclusion logic, exclusion logic, inclusion/exclusion logic, and independence logic are all $\NEXPTIME$-complete.
\end{theorem}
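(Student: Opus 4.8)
The plan is to reduce each of the five logics to first-order logic extended by a suitable finite collection of $\Sigma_1^1(\foctwo)$-definable atoms, and then to invoke Theorem~\ref{classNEXPTIME}. For each logic $L\in\{\df,\inc,\exc,\incexc,\ind\}$ I would fix a finite subcollection $\mathcal{A}_L$ of $\{A_{const},A_{dep},A_{inc},A_{exc},A_{ind}\}$ and show that every $L^2$-sentence is equivalent, via a polynomial-time computable translation, to an $\fotwo(\mathcal{A}_L)$-sentence. Since each of the five atoms is $\Sigma_1^1(\foctwo)$-definable --- indeed the defining formulas $\varphi_{const},\dots,\varphi_{ind}$ are $\foctwo$-formulas over the auxiliary relation symbols, hence $\Sigma_1^1(\foctwo)$-formulas with empty second-order prefix --- Theorem~\ref{classNEXPTIME} yields membership of $\sat[L^2]$ and $\finsat[L^2]$ in $\NEXPTIME$. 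Hardness is immediate: since $\fotwo\le L^2$ (Proposition~\ref{whatever}) and both $\sat[\fotwo]$ and $\finsat[\fotwo]$ are already $\NEXPTIME$-hard, the same holds for $L^2$.

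The heart of the matter is the translation of the dependency atoms, and the guiding observation is that, because only $x$ and $y$ are available, the free variables of any such atom lie in $\{x,y\}$; hence by Proposition~\ref{freevar} its truth at a team $X$ with $\dom(X)=\{x,y\}$ depends only on the binary relation $\rel[X]$. I would therefore replace each atom by an equivalent $\fotwo(\mathcal{A}_L)$-formula, the equivalence holding at every team, so that a compositional substitution yields the desired sentence translation. For dependence logic, deleting duplicate arguments shows that every atom over $\{x,y\}$ is either a tautology, a constancy atom (captured by $A_{const}$), or $\dep[x,y]$, $\dep[y,x]$ (captured by $A_{dep}$ applied to $(x,y)$, resp.\ $(y,x)$), so $\mathcal{A}_{\df}=\{A_{const},A_{dep}\}$. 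For inclusion and exclusion logic I would pad each unary atom $u\subseteq v$ to $(u,u)\subseteq(v,v)$ and observe that, for tuples over $\{x,y\}$, every relation $\rel(X,\overline{u})$ is determined by $\rel[X]$ through a fixed manipulation of its columns; this reduces an atom of arbitrary arity to a binary one captured by $A_{inc}$ (e.g.\ $(x,y,x)\subseteq(y,x,y)$ reduces to the symmetry condition $(x,y)\subseteq(y,x)$), respectively $A_{exc}$. Thus $\mathcal{A}_{\inc}=\{A_{inc}\}$, $\mathcal{A}_{\exc}=\{A_{exc}\}$, and $\mathcal{A}_{\incexc}=\{A_{inc},A_{exc}\}$.

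The step I expect to be the main obstacle is independence logic, because the conditional independence atom $\overline{u}\bot_{\overline{w}}\overline{v}$ does \emph{not} reduce to the single pure-independence atom $A_{ind}$. Over $\{x,y\}$ the conditioning can silently encode other dependencies: one checks, for instance, that $x\bot_y x$ is equivalent to $\dep[y,x]$ and that $x\bot x$ forces $x$ to be constant. I would therefore set $\mathcal{A}_{\ind}=\{A_{ind},A_{dep},A_{const}\}$ and carry out the finite case analysis over all choices of the three argument tuples $\overline{u},\overline{v},\overline{w}$ over $\{x,y\}$ (those with $\overline{u}$ or $\overline{v}$ contained in $\overline{w}$ being tautologies), verifying in each remaining case that the atom is equivalent to a Boolean combination of applications of $A_{ind}$, $A_{dep}$, and $A_{const}$. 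Once this bookkeeping is complete, the translation is an obvious constant-size replacement at each atom, hence polynomial-time computable, and its correctness follows by compositionality of team semantics together with locality; the reduction to Theorem~\ref{classNEXPTIME} then goes through uniformly for all five logics, establishing $\NEXPTIME$-completeness.
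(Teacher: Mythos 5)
Your proposal is correct and takes essentially the same route as the paper's own proof: the same atom collections $\{A_{const},A_{dep}\}$, $\{A_{inc}\}$, $\{A_{exc}\}$, $\{A_{inc},A_{exc}\}$, and $\{A_{const},A_{dep},A_{ind}\}$, the same normalizations of atoms over $\{x,y\}$ (removing repetitions, padding, column manipulation, and the key equivalences such as $x\bot_y x\equiv\dep[y,x]$ and the tautologous conditional atoms), and the same appeal to Theorem~\ref{classNEXPTIME} for the upper bound with $\fotwo$-hardness via Proposition~\ref{whatever} for the lower bound. The finite case analysis for conditional independence atoms that you flag as the main obstacle is precisely the bookkeeping the paper carries out, so there is no gap.
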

\begin{proof}
We establish polynomial time translations $\dtwo \to \fotwo(\{A_{const}, A_{dep}\})$, $\inc^2 \to \fotwo(A_{inc})$, $\exc^2 \to \fotwo(A_{exc})$, $\incexc^2 \to \fotwo(A_{inc},A_{exc})$, and  $\ind^2 \to \fotwo(\{A_{const}, A_{dep}, A_{ind}\})$ that preserve equivalence. The result then follows from Theorem \ref{classNEXPTIME} and the fact that the generalised atoms $A_{const}$, $A_{dep}$, $A_{exc}$, $A_{inc}$, $A_{ind}$ are all $\Sigma_1^1(\foctwo)$-definable.

Notice first that in dependence atoms, repetition of variables can always be avoided.
The atom $\dep[\overline{x},y]$ is equivalent to the atom $\dep[\overline{x}\, ',y]$, where $\overline{x}\, '$ is obtained from
$\overline{x}$ by
simply removing the repetition of variables. Furthermore, if $y$ occurs in the tuple $\overline{x}$,
then $\dep[\overline{x},y]$ is equivalent to $y=y$. Thus we may assume that in formulas of two-variable dependence logic, only dependence atoms $\dep[x]$, $\dep[y]$, $\dep[x,y]$, and $\dep[y,x]$ may occur. Clearly $\dep[x]$
is equivalent to the generalized atom $A_{const}(x)$,
while $\dep[x,y]$
is equivalent to the generalized atom $A_{dep}(x,y)$.
Since $A_{const}$ and $A_{dep}$ are $\Sigma_1^1(\foctwo)$-definable atoms, by Theorem \ref{classNEXPTIME},  $\sat[\fotwo(\{A_{const}, A_{dep}\})]$ and $\finsat[\fotwo(\{A_{const}, A_{dep}\})]$ are $\NEXPTIME$-complete. Thus both $\sat[\dtwo]$ and $\finsat[\dtwo]$ are as well.

%
It is straightforward to show that in two-variable inclusion logic, only inclusion atoms of type
$(y_1,y_2)\subseteq(z_1,z_2)$, where $y_1,y_2,z_1,z_2\in \{x,y\}$, are needed.
For example, the inclusion atom $x\subseteq y$ can be replaced by the equivalent inclusion atom $(x,x)\subseteq (y,y)$,
and the inclusion atoms $(x,y,x)\subseteq (x,y,y)$ and $(x,y,y)\subseteq (y,x,x)$
can be replaced by the equivalent atomic formulas $x=y$
and $(x,y)\subseteq (y,x)$, respectively.
Thus we may assume that in formulas of two-variable inclusion logic,
only inclusion atoms of type $(y_1,y_2)\subseteq(z_1,z_2)$ may occur;
inclusion atoms of other kinds can easily be eliminated in polynomial time.
 Clearly $(y_1,y_2)\subseteq(z_1,z_2)$ is equivalent to the generalized atom $A_{inc}\big((y_1,y_2),(z_1,z_2)\big)$. Since $A_{inc}$ is a $\Sigma_1^1(\foctwo)$-definable atom, it follows from Theorem \ref{AQtwo nexptime} that $\sat[\fotwo(A_{inc})]$ and $\finsat[\fotwo(A_{inc})]$ are $\NEXPTIME$-complete. Thus $\sat[Inc^2]$ and $\finsat[Inc^2]$ are as well.

Using analogous argumentation, it is straightforward to show that in two-variable exclusion logic, only exclusion atoms of type $(y_1,y_2)\mid(z_1,z_2)$, where $y_1,y_2,z_1,z_2\in \{x,y\}$, are needed. Clearly $(y_1,y_2)\mid(z_1,z_2)$ is equivalent to the generalized atom $A_{exc}\big((y_1,y_2),(z_1,z_2)\big)$. Since $A_{exc}$ is a $\Sigma_1^1(\foctwo)$-definable atom, it follows from Theorem \ref{AQtwo nexptime} that both $\sat[\fotwo(A_{exc})]$ and $\finsat[\fotwo(A_{exc})]$ are $\NEXPTIME$-complete. Thus $\sat[\exc^2]$ and $\finsat[\exc^2]$ are as well. Similarly it follows that $\sat[\incexc^2]$ and $\finsat[\incexc^2]$ are $\NEXPTIME$-complete.
Likewise, it is easy to show that
in the formulas of two-variable independence logic,
only restricted versions of independence atoms are needed.
First notice that we may always assume that in independence atoms
$\overline{x}\bot_{\overline{y}}\overline{z}$, repetition of variables does not occur in any of the tuples $\overline{x}$,
$\overline{y}$ and $\overline{z}$.
By the semantics of independence atoms, it is also easy to check that the atoms $\overline{x} \bot_{\overline{y}} \overline{z}$ and $\overline{z}
\bot_{\overline{y}} \overline{x}$ are always equivalent. Furthermore, it is clear that the order of variables in the tuples $\overline{x}$, $\overline{y}$, and $\overline{z}$ makes no difference. Notice then that each of the
following atoms in the variables $x,y$ is equivalent to the formula $\exists x \, x=x$:
\[
\emptyset\bot_{\overline{x}} \overline{y}, \quad \overline{x}\bot_{(x,y)}\overline{y}, \quad x\bot_x x, \quad x\bot_x y, \quad x\bot_x (x,y), \quad y\bot_y y, \quad x\bot_y y, \quad y\bot_y (x,y).
\]
Notice also the following equivalences:
\begin{align*}
&(x,y)\bot_x (x,y) \equiv y\bot_x y, \quad y\bot_x (x,y)\equiv y\bot_x y,\quad (x,y)\bot_y (x,y) \equiv x\bot_y x,\\
&x\bot_y (x,y)\equiv x\bot_y x, \quad x\bot (x,y)\equiv x\bot x, \quad y\bot (x,y)\equiv y\bot y.
\end{align*}
Thus we may assume that only the independence atoms
$x\bot x$, $y\bot y$, $x\bot y$, $(x,y)\bot (x,y)$, $x\bot_y x$, and $y\bot_x y$ occur in the formulas of two-variable independence logic. It is straightforward to check that the following equivalences between independence atoms and generalized atoms hold:
\begin{align*}
& x\bot x \equiv A_{const}(x), \quad y\bot y \equiv A_{const}(y), \quad x\bot y \equiv A_{ind}\big((x,y)\big),\\
& (x,y)\bot (x,y) \equiv A_{const}(x)\wedge A_{const}(y), \quad x\bot_y x \equiv A_{dep}(y,x), \quad y\bot_x y \equiv A_{dep}(x,y).
\end{align*}
Since $A_{const}$, $A_{dep}$, and $A_{ind}$ are all $\Sigma_1^1(\foctwo)$-definable atoms, it follows from Theorem \ref{classNEXPTIME} that $\sat[\fotwo(\{A_{const}, A_{dep}, A_{ind}\})]$ and $\finsat[\fotwo(\{A_{const}, A_{dep}, A_{ind}\})]$ are $\NEXPTIME$-complete. Thus $\sat[\ind^2]$ and $\finsat[\ind^2]$ are as well.
\end{proof}

\section{Undecidability via non-tiling}\label{tiling}
%
In this section we introduce structures and methods that we will later employ to prove undecidability of the validity problem of two-variable dependence logic. 
Curiously, all attempts (by us or known to us) to use the
standard ($\Pi_1^0$-complete) tiling problem for the
undecidability proof have failed;
we will instead use the ($\Sigma_1^0$-complete) non-tiling
problem in our arguments below.
\emph{The grid} is the structure $\mG=(\mathbb{N}^2,V,H)$, where
$V=\{\big((i,j), (i,j+1)\big)\in \mathbb{N}^2\times \mathbb{N}^2 \mid i,j\in \mathbb{N}\}$  and 
$H=\{\big((i,j), (i+1,j)\big)\in \mathbb{N}^2\times \mathbb{N}^2 \mid i,j\in \mathbb{N}\}$.
%
%
%
%
%
%
%
A function $t:4\longrightarrow\mathbb{N}$ is called a \emph{tile type}.
Define the set
\(
\mathrm{TILES} := \{ P_t \mid t\text{ is a tile type} \}
\)
of unary relation symbols.
The unary relation symbols in the set $\mathrm{TILES}$ are called \emph{tiles}.
%
%
The number $t(0)$ is the \emph{top colour}, $t(1)$ the \emph{right colour},
$t(2)$ the \emph{bottom colour}, and $t(3)$ the \emph{left colour} of $P_t$.
Let $T$ be a finite nonempty set of tiles and $V$ and $H$ binary relation symbols. We say that a structure $\mf{A}=(A,V,H)$ is
\emph{$T$-tilable}, if there exists an expansion of $\mf{A}$ to
the vocabulary
$\{H,V\}\cup\{\ P_t\ |\ P_t\in T\ \}$
such that the following conditions hold for all $u,v \in A$.
\begin{enumerate}
\item
The point $u$ belongs to the extension of
exactly one symbol $P_t$ in $T$.
\item
If $u H v$, $P_t(u)$ and $P_s(v)$,
then the right colour of $P_t$ is the same as the left
colour of $P_{s}$.
\item
If $u V v$, $P_t(u)$ and $P_s(v)$,
then the top colour of $P_t$ is the same as the bottom
colour of $P_{s}$.
\end{enumerate}
We will next define the \emph{tiling problem} and the \emph{non-tiling problem}.
Let $\mathcal{F}$ denote the set of finite, nonempty subsets of $\mathrm{TILES}$.
We define 
$\mathcal{T} := \{T\in\mathcal{F} \mid \mf{G}\text{ is $T$-tilable} \}$ and ${\mathcal{\bar{T}'}} := \{T\in\mathcal{F} \mid \mf{G}\text{ is not $T$-tilable} \}$.
The \emph{tiling problem (non-tiling problem,} resp.) is the membership problem of the set $\mathcal{T}$ ($\mathcal{\bar{T}'}$, resp.)
with the input set $\mathcal{F}$.
\begin{theorem}[\cite{berger}]
The tiling problem is $\Pi_1^0$-complete.
\end{theorem}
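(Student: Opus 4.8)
The plan is to prove the two halves of $\pizeroone$-completeness separately: membership in $\pizeroone$ by a compactness argument, and $\pizeroone$-hardness by a reduction from the non-halting problem (the complement of the halting problem on empty input), which is $\pizeroone$-complete.

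For membership, the key reformulation is that a finite tile set $T$ tiles the grid $\mG$ if and only if $T$ admits a valid tiling of every finite $n\times n$ square $\{0,\dots,n-1\}^2$. One direction is immediate by restriction. For the converse I would organise all valid $T$-tilings of the squares into a tree whose level-$n$ nodes are the tilings of the $n\times n$ square, with each $(n{+}1)\times(n{+}1)$ tiling joined to its restriction to the $n\times n$ square; restrictions of valid tilings are valid, so this is well defined. Since $T$ is finite the tree is finitely branching, and by hypothesis it has nodes at every level, so K\"onig's Lemma yields an infinite branch, i.e.\ a compatible family of finite tilings whose union tiles all of $\mathbb{N}^2$. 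Consequently ``$T\notin\mathcal{T}$'' holds iff \emph{some} finite square admits no $T$-tiling, a property decidable for each fixed square and hence semidecidable by exhaustive search over squares of growing size. Thus the complement of $\mathcal{T}$ is recursively enumerable and $\mathcal{T}\in\pizeroone$. This reformulation also shows that tileability of $\mathbb{N}^2$ coincides with tileability of the whole plane, since both are equivalent to tileability of arbitrarily large squares, so the cited plane-tiling result of Berger transfers directly to $\mG$.

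For hardness I would construct, from a Turing machine $M$, a tile set $T_M$ with the property that $\mG$ is $T_M$-tilable iff $M$ does not halt on empty input. The intended encoding makes each horizontal row represent a machine configuration, the colour-matching condition along vertical adjacencies enforce a single computation step, and the matching along horizontal adjacencies transmit the tape contents together with the head position; a halting configuration is rendered untileable, so a full tiling of the quadrant survives precisely when the computation never terminates. Given this, the equivalence ``tilable iff non-halting'' is routine, and the map $M\mapsto T_M$ is clearly computable, so it witnesses a many-one reduction from the $\pizeroone$-complete non-halting problem to $\mathcal{T}$, yielding $\pizeroone$-hardness; combined with membership this gives $\pizeroone$-completeness.

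The hard part — and the reason this is genuinely Berger's theorem rather than a short exercise — is that $\mG$ carries \emph{no} origin constraint: the points $(i,0)$ have no $V$-predecessor, so the bottom colours of bottom-row tiles are entirely unconstrained, and nothing in the naive encoding forces the bottom row to carry the \emph{initial} configuration of $M$ (an arbitrary never-halting configuration would tile equally well, collapsing the reduction). Overcoming this is exactly where Berger's machinery enters: one must build into $T_M$ an \emph{aperiodic} scaffold imposing a rigid hierarchy of nested squares, within whose borders the computation is forced to be correctly initialised; with that scaffold in place the untileability of halting configurations does the rest. The remaining steps (verifying the simulation in both directions and the polynomial-time computability of $M\mapsto T_M$) are then purely mechanical.
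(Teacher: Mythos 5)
The paper offers no proof of this theorem --- it is imported verbatim from Berger's work via the citation --- and your proposal correctly reconstructs the standard argument behind it: the $\Pi_1^0$ upper bound via the K\"onig's-Lemma compactness equivalence between tiling $\mathbb{N}^2$ and tiling all finite squares (which also justifies transferring the plane-tiling result to the quadrant grid), and hardness by a computable reduction from non-halting. You rightly identify the absence of an origin constraint as the genuine obstacle and, exactly as the paper does, delegate its resolution --- the aperiodic hierarchical scaffold forcing correct initialisation of the simulated machine --- to Berger's construction, so your sketch is sound and takes essentially the same (standard) route, with the hardness core remaining a correctly signposted appeal to the cited result rather than a self-contained argument.
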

The \emph{non-tiling problem} is the complement of the tiling problem. Thus the following corollary follows.
\begin{corollary}\label{nontilingc}
The non-tiling problem is $\Sigma_1^0$-complete.
\end{corollary}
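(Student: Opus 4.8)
The plan is to observe that the corollary is an immediate consequence of the preceding theorem (Berger's theorem) together with the basic duality of the first two levels of the arithmetical hierarchy. The two facts I would invoke are: (i) a set of natural numbers (or of finitely-presented objects, coded as naturals) lies in $\Sigma_1^0$ if and only if its complement lies in $\Pi_1^0$; and (ii) many-one reducibility commutes with complementation, i.e.\ $A\leq_m B$ holds if and only if $\bar A\leq_m \bar B$, since the very same computable reduction function witnesses both. From (i) and (ii), completeness transfers under complementation: if a set $X$ is $\Pi_1^0$-complete, then $\bar X$ is $\Sigma_1^0$-complete. Indeed, $\bar X\in\Sigma_1^0$ by (i), and for hardness, given any $\Sigma_1^0$ set $Y$ we have $\bar Y\in\Pi_1^0$, so $\bar Y\leq_m X$ by $\Pi_1^0$-completeness of $X$, whence $Y\leq_m\bar X$ by (ii).

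I would then apply this with $X=\mathcal{T}$, the tiling problem, which is $\Pi_1^0$-complete by the theorem cited from \cite{berger}. Since the non-tiling problem $\mathcal{\bar{T}'}$ is by definition exactly the complement of $\mathcal{T}$ relative to the input set $\mathcal{F}$, the abstract fact above yields that $\mathcal{\bar{T}'}$ is $\Sigma_1^0$-complete, which is the claim. There is no real obstacle here: the only point requiring a moment of care is that complementation is taken within the ambient input set $\mathcal{F}$ rather than over all of $\mathrm{TILES}$, but since $\mathcal{F}$ is itself a computable set (membership in $\mathcal{F}$ — being a finite nonempty set of tiles — is decidable), this relativization does not affect the level in the hierarchy nor the reductions.

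For completeness one may also verify the $\Sigma_1^0$ upper bound directly, as an independent sanity check rather than as a necessary step: by a standard compactness (König's lemma) argument, the grid $\mG$ fails to be $T$-tilable precisely when some finite $n\times n$ initial square admits no proper $T$-tiling, and tilability of a fixed finite square is decidable; enumerating over $n$ and searching for such a failing square is a recursively enumerable procedure, placing $\mathcal{\bar{T}'}$ in $\Sigma_1^0$. The hardness half, however, is cleanest to obtain purely by complementing Berger's hardness as above.
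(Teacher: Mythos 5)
Your proposal is correct and matches the paper's own (one-line) justification: the paper simply observes that the non-tiling problem is the complement of the $\Pi_1^0$-complete tiling problem, which is exactly the complementation-duality argument you spell out in detail. Your added care about relativizing to the decidable input set $\mathcal{F}$ and the optional K\"onig's lemma sanity check are fine but not needed beyond what the paper asserts.
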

The proof of the following lemma is straightforward.
\begin{lemma}\label{tilingdefinablelemma}
There is a computable function associating
each input $T$ to the
non-tiling problem with an $\fotwo$-sentence $\varphi_{T}$ of
the vocabulary $\tau:=\{H,V\}\cup T$ 
such that for every structure $\mf{A}$ of the vocabulary $\{H,V\}$,  the structure $\mf{A}$ is not $T$-tilable iff
for every expansion $\mf{A}^*$ of\, $\mf{A}$ to the
vocabulary $\tau$, it holds that $\mf{A}^*\models\varphi_{T}$.
\end{lemma}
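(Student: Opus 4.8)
The plan is to capture, by a single $\fotwo$-sentence $\psi_T$ over $\tau=\{H,V\}\cup T$, the statement that an expansion of $\mf{A}$ encodes a legal $T$-tiling, and then to take $\varphi_T:=\neg\psi_T$. The crucial observation is that all three defining conditions of a $T$-tiling are \emph{local}: each constrains either a single element or a single $H$- or $V$-edge, so each is expressible using only the two variables $x$ and $y$. Hence $\psi_T$, and therefore $\varphi_T$, will indeed be an $\fotwo$-sentence.

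Concretely, I would set $\psi_T:=\psi_{\mathrm{cell}}\wedge\psi_H\wedge\psi_V$, where
\[
\psi_{\mathrm{cell}} := \forall x\Big(\bigvee_{P_t\in T}P_t(x)\ \wedge\ \bigwedge_{\substack{P_t,P_s\in T\\ t\neq s}}\neg\big(P_t(x)\wedge P_s(x)\big)\Big)
\]
says that every element carries exactly one tile of $T$;
\[
\psi_H := \forall x\forall y\Big(H(x,y)\rightarrow \bigwedge_{\substack{P_t,P_s\in T\\ t(1)\neq s(3)}}\neg\big(P_t(x)\wedge P_s(y)\big)\Big)
\]
enforces horizontal colour matching, namely that whenever $H(x,y)$ holds the right colour $t(1)$ of the tile at $x$ agrees with the left colour $s(3)$ of the tile at $y$; and $\psi_V$ is the analogous vertical constraint, obtained by replacing $H$ with $V$ and the side condition $t(1)\neq s(3)$ with $t(0)\neq s(2)$ (top colour versus bottom colour). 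Since $T$ is finite, all three conjunctions are finite, only $x$ and $y$ occur, and the whole sentence is assembled mechanically from the colours listed in $T$; thus $T\mapsto\psi_T$, and hence $T\mapsto\varphi_T$, is computable.

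Next I would verify the evident correspondence: for any expansion $\mf{A}^*$ of $\mf{A}$ to $\tau$, we have $\mf{A}^*\models\psi_T$ exactly when the relations $P_t^{\mf{A}^*}$ satisfy the three numbered conditions in the definition of $T$-tilability. Therefore $\mf{A}$ is $T$-tilable if and only if some expansion $\mf{A}^*$ satisfies $\psi_T$. Setting $\varphi_T:=\neg\psi_T$ and contraposing yields precisely the lemma: $\mf{A}$ is not $T$-tilable iff no expansion satisfies $\psi_T$, i.e.\ iff every expansion $\mf{A}^*$ of $\mf{A}$ to $\tau$ satisfies $\varphi_T$.

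There is no genuine obstacle here, which is why the proof is straightforward. The only two points requiring care are keeping every conjunct inside the two-variable fragment, which is immediate from the locality of the tiling constraints, and the direction of the quantification over expansions: it is exactly the single negation step, turning ``there exists a tiling expansion satisfying $\psi_T$'' into ``all expansions satisfy $\varphi_T$'', that converts the \emph{existential} tiling property into the \emph{universal} non-tiling property needed in the later undecidability argument.
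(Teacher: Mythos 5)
Your proof is correct and is precisely the argument the paper intends: the paper omits the proof as ``straightforward,'' and your construction of $\psi_T$ from the three local tiling conditions, together with setting $\varphi_T:=\neg\psi_T$ and contraposing over expansions, is the standard way to fill it in. One cosmetic point worth noting: since the paper's team-semantics syntax is given in negation normal form, $\neg\psi_T$ should be rewritten in NNF by pushing the negation down to literals (mechanical, and it stays inside $\fotwo$), and the flatness of first-order formulas (Proposition~\ref{whatever}) guarantees that the classical truth of $\varphi_T$ used here agrees with its team-semantics reading needed later in the undecidability argument.
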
 
%
%
%
%
%
%
%

\begin{definition}\label{def:gridlike}
Let $\tau=\{V,H\}$ be a vocabulary where $V$ and $H$ are binary relation symbols. Let  $\mA=(A,V,H)$ be a $\tau$-structure. We say that $\mA$ is \emph{gridlike} if the below conditions hold.
\begin{enumerate}
\item The extension of $V$ in $\mA$ is serial (i.e., $\forall x\in A$ $\exists y\in A$ s.t. $V(x,y)$).
\item The extension of $H$ in $\mA$ is serial (i.e., $\forall x\in A$ $\exists y\in A$ s.t. $H(x,y)$).
\item If $a,b,c,b',c'\in A$ are such that $V(a,b)$, $H(b,c)$, $H(a,b')$, and $V(b',c')$, then $c=c$'.
\end{enumerate}
\end{definition}

Note that it follows from the above definition that in gridlike structures, for every
point $a$, there exist points $b$, $c$ and $d$ such that $H(a,b)$, $V(a,c)$, $V(b,d)$, and $H(c,d)$.

Let $\tau$ be the vocabulary of gridlike structures and $U$, $P$, $Q$, $C$ unary relation symbols. We say that a $\tau\cup\{U, P, Q, C\}$-structure $\mA$ is \emph{striped and gridlike} if the $\tau$-reduct of $\mA$ is gridlike, the extensions of $P$ and $Q$ in $\mA$ are \emph{distinct} singleton sets, the extension of $U$ in $\mA$ is the union of the extensions of $P$ and $Q$, and $\mA$ has the following property (intuitively $C$ creates stripes in $\mA$):
\begin{equation}\label{ideaofC}
\text{$\bigl(H(a,b)\Rightarrow (C(a)\Leftrightarrow C(b))\bigl)$
and $\bigl(V(a,b)\Rightarrow (C(a)\Leftrightarrow \neg C(b))\bigl)$.}
\end{equation}
The following lemma can be now proven by a simple inductive argument.
\begin{lemma}\label{homolemma}
If $\mA$ is striped and gridlike, then there exists a homomorphism from the grid into $\mA$. 
\end{lemma}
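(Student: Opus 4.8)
The plan is to construct the homomorphism $h\colon\mG\to\mA$ directly, by recursion on $\mathbb{N}^2$, using the two seriality clauses of Definition \ref{def:gridlike} to take a step to the right and a step up, and the confluence clause (item 3) to close each elementary square. Since the grid $\mG$ carries only the relations $V$ and $H$, a homomorphism into $\mA$ need only respect $V$ and $H$; in particular the predicates $U,P,Q,C$ play no role in this lemma, and the entire argument uses only that the $\{V,H\}$-reduct of $\mA$ is gridlike.

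First I would isolate the key \emph{diamond-closure} step as an auxiliary observation: if $a\in A$ has a chosen $H$-successor $a_H$ (so $H(a,a_H)$) and a chosen $V$-successor $a_V$ (so $V(a,a_V)$), then there is a point $d$ with $V(a_H,d)$ and $H(a_V,d)$. Indeed, by seriality of $V$ pick some $d$ with $V(a_H,d)$, and by seriality of $H$ pick some $d'$ with $H(a_V,d')$; instantiating item 3 of Definition \ref{def:gridlike} with this data (reading $a,a_V,d',a_H,d$ for its $a,b,c,b',c'$) matches all four premises $V(a,a_V)$, $H(a_V,d')$, $H(a,a_H)$, $V(a_H,d)$ and hence forces $d=d'$. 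Thus the single point $d$ witnesses both $V(a_H,d)$ and $H(a_V,d)$, i.e.\ the ``up-then-right'' and ``right-then-up'' moves from $a$ meet.

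Then I would define $h$ by recursion on the antidiagonals $\{(i,j)\mid i+j=n\}$, maintaining the invariant that, whenever the relevant neighbours are defined, (a) $H\bigl(h(i,j),h(i+1,j)\bigr)$ and (b) $V\bigl(h(i,j),h(i,j+1)\bigr)$ hold. Concretely: let $h(0,0)$ be an arbitrary element of $A$; define the bottom row by letting $h(i+1,0)$ be an $H$-successor of $h(i,0)$ and the left column by letting $h(0,j+1)$ be a $V$-successor of $h(0,j)$; and define each interior point $h(i+1,j+1)$ by applying the diamond closure to $a=h(i,j)$ with $a_H=h(i+1,j)$ and $a_V=h(i,j+1)$, setting $h(i+1,j+1):=d$. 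The recursion is well founded on $i+j$, since the interior point $(i+1,j+1)$ at level $n=i+j+2$ depends only on $h(i,j)$ (level $n-2$) and on $h(i+1,j),h(i,j+1)$ (level $n-1$), whose connecting edges (a) and (b) are established strictly before level $n$. A short bookkeeping check then shows that every instance of the invariant is established exactly once — the $j=0$ cases of (a) and the $i=0$ cases of (b) when laying down the border, and all remaining cases when the corresponding interior point is defined. Since the $H$- and $V$-edges of $\mG$ are precisely the pairs $\bigl((i,j),(i+1,j)\bigr)$ and $\bigl((i,j),(i,j+1)\bigr)$, the invariant says exactly that $h$ preserves $V$ and $H$, so $h$ is the desired homomorphism.

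The only genuine content is the diamond-closure step, and I do not expect it to be a real obstacle: it is the immediate combination of the two seriality clauses with the confluence clause, which is exactly what clause 3 of Definition \ref{def:gridlike} was designed to provide (and is essentially the consequence already recorded in the remark following that definition). The remaining work — the recursion and the verification that the invariant covers all edges — is routine, which is why the lemma can be proved ``by a simple inductive argument.''
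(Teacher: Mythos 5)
Your proof is correct and is exactly the ``simple inductive argument'' the paper alludes to: the diamond-closure step you isolate is precisely the consequence of Definition \ref{def:gridlike} recorded in the remark immediately following it, and your antidiagonal recursion fills in the routine bookkeeping the paper omits. Nothing to fix; note only that, as you observe, the striped part of the hypothesis ($U,P,Q,C$) is not needed, so you in fact prove the slightly stronger statement for all gridlike structures.
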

%

%
\begin{lemma}\label{likestogrids}
Let $T$ be an input to the non-tiling problem. The grid is non-$T$-tilable iff  (the \{H,V\}-reduct of) every striped gridlike structure is non-$T$-tilable.
\end{lemma}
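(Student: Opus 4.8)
The plan is to prove the two directions of the biconditional separately, with one direction being essentially immediate and the other resting on Lemma~\ref{homolemma}. It is cleanest to argue both directions contrapositively, i.e.\ to establish the equivalent statement that the grid $\mG$ is $T$-tilable if and only if the $\{H,V\}$-reduct of \emph{some} striped gridlike structure is $T$-tilable.

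For the easy direction, I would first observe that $\mG$ itself satisfies the three conditions of Definition~\ref{def:gridlike}: both $V$ and $H$ are serial on $\mathbb{N}^2$, and the confluence condition holds because following a $V$-edge and then an $H$-edge out of $(i,j)$ reaches the same point $(i+1,j+1)$ as following an $H$-edge and then a $V$-edge. Thus $\mG$ is gridlike, and it expands to a striped gridlike structure by setting, say, $P=\{(0,0)\}$, $Q=\{(0,1)\}$, $U=P\cup Q$, and letting $C$ collect exactly the points with even second coordinate; this choice of $C$ visibly respects (\ref{ideaofC}), since $H$ preserves and $V$ flips the parity of the second coordinate. Consequently, if $\mG$ is $T$-tilable then this expansion is a striped gridlike structure whose $\{H,V\}$-reduct is $\mG$, hence $T$-tilable; contrapositively, if every striped gridlike reduct is non-$T$-tilable, then so is $\mG$.

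The substantive direction is the transfer of a tiling along a homomorphism. Suppose some striped gridlike structure $\mA$ has a $T$-tilable $\{H,V\}$-reduct, witnessed by an assignment of tiles to the points of $\mA$. By Lemma~\ref{homolemma} there is a homomorphism $h$ from $\mG$ into $\mA$. I would then pull the tiling back along $h$, colouring each grid point $p$ with the tile assigned to $h(p)$. Since $h$ preserves both $H$ and $V$, every horizontal (vertical) edge of $\mG$ is sent to a horizontal (vertical) edge of $\mA$, so the right/left and top/bottom colour-matching constraints on $\mG$ follow from those already satisfied in $\mA$, and single-valuedness of the induced tile assignment is inherited directly. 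Hence $\mG$ is $T$-tilable, which is the contrapositive of the remaining direction.

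The only genuine content lies in this last step, and even there the work is routine once Lemma~\ref{homolemma} is available: the homomorphism guarantees that the two local tiling constraints, which are universal conditions on $H$- and $V$-edges, survive the pull-back. The point to be careful about is that the pulled-back colouring still assigns exactly one tile to each grid point, but this is immediate because $h$ is a total function and the original assignment is a genuine single-valued tiling.
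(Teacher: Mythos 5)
Your proof is correct and follows essentially the same route as the paper's: one direction is the pull-back of a tiling along the homomorphism supplied by Lemma~\ref{homolemma}, and the other rests on the observation that the grid itself is the $\{H,V\}$-reduct of a striped gridlike structure. You merely spell out details the paper leaves implicit (the explicit expansion with $P=\{(0,0)\}$, $Q=\{(0,1)\}$, $C$ the even-row stripes, and the verification that the pulled-back colouring is a genuine tiling), all of which check out.
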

\begin{proof}
The direction from left to right follows from Lemma \ref{homolemma} in a straightforward way.
The converse holds since the grid is an \{H,V\}-reduct of a striped gridlike structure.
\end{proof}

\section{The validity problem of $\dtwo$ is undecidable}

In this section we give a reduction from the non-tiling problem to the validity problem of $\dtwo$.

Let $\tau=\{V,H,C,U,P,Q\}$ be the vocabulary of striped gridlike structures.
We will first define a formula $\varphi_{\mathit{non-grid}}$ of $\dtwo$ such that $\mA$ is not striped and gridlike iff $\mA\models \varphi_\mathit{non-grid}$.
%
%
%
We first notice that the first two conditions of Definition \ref{def:gridlike} are easy to deal with. Define
\(
\varphi_{\mathit{non-serial}}  := \exists x \forall y \neg V(x,y) \lor \exists x \forall y \neg H(x,y).
\)
The third condition of Definition \ref{def:gridlike} is nontrivial.
In the below construction, we will use the predicates $P$, $Q$, $U$ for counting (only). We will first show how to force the extensions of $P$ and $Q$ to be distinct singletons and the extension of $U$ to be the union of $P$ and $Q$. The next formulae will be used for dealing with the cases where this \emph{does not} hold.
\begin{align*}
\varphi_{\mathit{non-singleton}}(X)  :=\ & \forall x \neg X(x) \lor \exists x \exists y \big(X(x)\wedge X(y)\wedge \neg x=y\big)\\
\varphi_{\mathit{non-distinct}}(X,Y)  :=\ & \exists x \big(X(x) \land Y(x)\big)\\
\varphi_{\mathit{non-union}}(X,Y,Z)  :=\ & \exists x \Big(X(x) \land \big(\neg  Y(x)\lor \neg Z(x)\big)\Big) \lor \exists x \Big(\neg X(x) \land \big(Y(x)\lor Z(x)\big)\Big)\\
\varphi_{\mathit{\lvert U \rvert \not= 2}}  :=\ & \varphi_{\mathit{non-singleton}}(P) \lor \varphi_{\mathit{non-singleton}}(Q) \lor \varphi_{\mathit{non-distinct}}(P,Q)\\
&\lor \varphi_{\mathit{non-union}}(U,P,Q).
\end{align*}
It is easy to check that the $\tau$-models $\mA$ such that $\mA\not\models \varphi_{\mathit{\lvert U \rvert \not= 2}}$ are exactly those models where the extensions of $P$ and $Q$ are distinct singletons and the extension of $U$ is the union of the extensions of $P$ and $Q$ (and thus the cardinality of 
the extension of $U$ is $2$).
%

%
%
%
%
%
%
%
%
%
We will now show how to enforce Equation \eqref{ideaofC}.
The formula $\varphi_{\mathit{non-stripes}}$ below takes care of the
cases where  \eqref{ideaofC} does \emph{not} hold. Define
\[
\varphi_{\mathit{non-stripes}}  := \exists x \exists y \Big(\Big(H(x,y) \wedge \big(C(x) \leftrightarrow \neg C(y)\big) \Big) \lor \Big( V(x,y) \wedge \big( C(x) \leftrightarrow C(y)\big)\Big)\Big).
\]
We are now ready to show how to deal with models that
violate the last condition of Definition \ref{def:gridlike}. To understand the intended
meaning of  the following formula, assume that the extension of $U$ is of size two and that the condition given by Equation \eqref{ideaofC} holds. Note also that  from \eqref{ideaofC} it follows that if such points $c$ and $c'$ exist that violate the last
condition of Definition \ref{def:gridlike}, then $c$ and $c'$ agree about $C$, i.e., 
we have $C(c)$ iff $C(c')$. We first deal with the case where $C(c)$ and $C(c')$ both hold. We denote by $\varphi_{\mathit{non-C^+-join}}$ the following formula (whose meaning is fully explained in 
the proof of Lemma \ref{joinlemma}):
\begin{align*}
 & \forall x \Big(\neg U(x) \lor \exists y  \Big( C(y) \land \dep[y,x]
\land \exists x \Big(\dep[x,y] \land \big(\big(\dep[x]\wedge H(x,y) \big) \lor \big(\dep[x] \wedge V(x,y)\big)\big) \\
&\quad\quad\quad \land \exists y \big(\dep[y] \land \big(V(y,x) \lor H(y,x)\big) \land \neg C(y))\big)\Big)\Big)\Big).
%
%
%
\end{align*}
To deal with the case where $\neg C(c)$ and $\neg C(c')$,
we define the formula $\varphi_{\mathit{non-C^--join}}$ which is obtained from $\varphi_{\mathit{non-C^+-join}}$ by simultaneously replacing each $C(x)$ and $C(y)$ by $\neg C(x)$ and $\neg C(y)$, respectively. Finally, we define that $\varphi_{\mathit{non-join}}\dfn \varphi_{\mathit{non-C^+-join}} \lor \varphi_{\mathit{non-C^--join}}$ and $\varphi_{\mathit{non-grid}}  :=  \varphi_{\mathit{non-serial}} \lor  \varphi_{\mathit{\lvert U \rvert \not= 2}} \lor \varphi_{\mathit{non-stripes}} \lor \varphi_{\mathit{non-join}}$.
%
%
%
%
%
%
%
%
%
\begin{lemma}\label{joinlemma}
Let $\tau=\{V,H,C,U,P,Q\}$ be the vocabulary of striped gridlike structures. Let $\mA$ be a $\tau$-structure such that the extension of $U$ is of cardinality $2$. Assume the condition \eqref{ideaofC} holds. Then $\mA\models \varphi_{\mathit{non-join}}$
iff the last condition of Definition \ref{def:gridlike} fails in $\mA$.
\end{lemma}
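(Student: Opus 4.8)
The plan is to dismantle $\varphi_{\mathit{non-join}}$ from the outside in and reduce the whole statement to a calculation on two-element teams. Since we evaluate sentences on $\{\emptyset\}$ and the empty team satisfies every formula, $\mA\models\varphi_{\mathit{non-join}}$ holds iff $\mA\models\varphi_{\mathit{non-C^+-join}}$ or $\mA\models\varphi_{\mathit{non-C^--join}}$. As $\varphi_{\mathit{non-C^--join}}$ is obtained from $\varphi_{\mathit{non-C^+-join}}$ by swapping $C$ with $\neg C$, and since by the remark preceding the lemma any pair $c,c'$ witnessing a failure of Definition~\ref{def:gridlike}(3) must agree on $C$, it suffices to show that $\mA\models\varphi_{\mathit{non-C^+-join}}$ iff Definition~\ref{def:gridlike}(3) fails through some $c,c'$ with $C(c)$ and $C(c')$; the subcase $\neg C(c),\neg C(c')$ is handled identically by $\varphi_{\mathit{non-C^--join}}$.

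Next I would strip off the outer $\forall x(\neg U(x)\lor\cdots)$. Writing $U=\{p,q\}$ with $p\neq q$ (using $|U|=2$), the assignments sending $x$ into $U$ cannot be absorbed by the $\neg U(x)$ disjunct, so any witnessing split must place both $s_p$ and $s_q$ in the right-hand team, while every other assignment may safely go into the $\neg U(x)$ part; by downward closure (Proposition~\ref{Downward closure}) this yields that $\mA\models\varphi_{\mathit{non-C^+-join}}$ iff $\mA\models_{\{s_p,s_q\}}\chi$, where $\chi=\exists y(\cdots)$ is the body and $s_p(x)=p$, $s_q(x)=q$. From here the argument is a reading of the three quantifier blocks on a team of size two, using Proposition~\ref{Downward closure} at each step to trim the witnessing sets to singletons so that the team stays of size at most two throughout.

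For the implication from a failure of Definition~\ref{def:gridlike}(3) to $\mA\models\varphi_{\mathit{non-C^+-join}}$, I would take a violating configuration $V(a,b),H(b,c),H(a,b'),V(b',c')$ with $c\neq c'$ and $C(c),C(c')$, observe that \eqref{ideaofC} forces $a\notin C$ and $b\neq b'$, and then build the witnesses explicitly: set $y\mapsto c$ on $s_p$ and $y\mapsto c'$ on $s_q$ (so that $C(y)$ holds and $\dep[y,x]$ holds because $c\neq c'$); reset $x\mapsto b$ and $x\mapsto b'$ respectively (so that $\dep[x,y]$ holds because $b\neq b'$); split the disjunction so that the $c$-assignment enters the $H(x,y)$ disjunct and the $c'$-assignment the $V(x,y)$ disjunct, each being a singleton and hence satisfying $\dep[x]$; and finally reset $y\mapsto a$ on both (so $\dep[y]$ and $\neg C(y)$ hold), splitting $V(y,x)\lor H(y,x)$ according to $V(a,b)$ and $H(a,b')$.

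The converse is the crux and the step I expect to be the main obstacle. Given $\mA\models_{\{s_p,s_q\}}\chi$, I would peel the quantifiers. The first block yields $c\neq c'$ in $C$, because $\dep[y,x]$ together with the distinct $x$-values $p,q$ forces the two assignments to carry distinct $y$-values. The second block yields $b,b'$ with $b\neq b'$, because $\dep[x,y]$ forbids the two assignments, which now carry distinct $y$-values $c\neq c'$, from sharing an $x$-value; this inequality is exactly what makes the disjunction informative, since each disjunct carries $\dep[x]$ and so the two assignments can lie neither both in the $H(x,y)$ part nor both in the $V(x,y)$ part, forcing one to satisfy $H(x,y)$ and the other $V(x,y)$. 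The third block forces, via $\dep[y]$, a single value $a$ with $\neg C(a)$ shared by both assignments, and supplies an edge $V(a,\cdot)$ or $H(a,\cdot)$ into each of $b,b'$. Finally I would invoke \eqref{ideaofC}: from $a\notin C$ and $c,c'\in C$ the colour of each intermediate vertex is determined, which excludes the wrong option in each binary split and pins the two assignments to the patterns $V(a,b),H(b,c)$ and $H(a,b'),V(b',c')$, up to interchanging their names. Together with $c\neq c'$ this is precisely a failure of Definition~\ref{def:gridlike}(3).
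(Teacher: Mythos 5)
Your proposal is correct and follows essentially the same route as the paper's proof: the same reduction of $\varphi_{\mathit{non-join}}$ to the $C^+$/$C^-$ cases via agreement on $C$, the same peeling of quantifier blocks on the two-element teams $\{\{(x,p)\},\{(x,q)\}\}$ and their extensions, the same use of downward closure to trim lax witnesses to singletons, the same exploitation of $\dep[y,x]$, $\dep[x,y]$, $\dep[x]$, $\dep[y]$ to force distinctness, the forced one--one split of the disjunctions, and the final appeal to the stripe condition \eqref{ideaofC} to pin down the pattern $V(a,b),H(b,c),H(a,b'),V(b',c')$ with $c\neq c'$. The only cosmetic difference is that you separate the two implications (with an explicit witness construction for the forward direction) where the paper runs a single chain of equivalences.
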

\begin{proof}
From \eqref{ideaofC} it follows that if such $c$ and $c'$ exist in $\mA$ that violate the last condition of Definition \ref{def:gridlike}, then $c$ and $c'$ agree on $C$.
We will show that
\begin{align}\label{meetcase}
\empty\hspace{-5mm}\vspace{-3mm}
\mA\models\varphi_{\mathit{non-C^+-join}}  \text{ iff the last condition of Def. \ref{def:gridlike} fails in $\mA$ 
for some $c,c'$ s.t. $C(c)$ \& $C(c')$}.
\end{align}
The analogous argument for $\varphi_{\mathit{non-C^--join}}$
and the case where $\neg C(c)$ and $\neg C(c')$ hold is similar.
Below we denote by $\{(x_1,v_1),...,(x_k,v_k)\}$ the 
variable assignment that maps $x_i$ to $v_i$ for each $i$.
Let $u,u'$ be the elements that are in the extension of $U$ in $\mA$.
We thus have $\mA\models\varphi_{\mathit{non-C^+-join}}$ iff
\begin{align*}
\hspace{-4mm}\mA \models_{X_1}  \exists y&  \Big( C(y) \land \dep[y,x] 
\land \exists x \Big(\dep[x,y] \land \big(\big(\dep[x]\wedge H(x,y) \big)\\
&\quad \lor \big(\dep[x] \wedge V(x,y)\big)\big)  \land \exists y \big(\dep[y] \land \big(V(y,x) \lor H(y,x)\big) \land \neg C(y))\big)\Big)\Big),
\end{align*}
where $X_1= \{\{(x,u)\},\{(x,u')\}\}$.
Now, recalling that dependence logic has the downwards closure property (cf. 
proposition \ref{Downward closure}), we observe that 
the above holds if and only if there exist \emph{distinct} (distinctness being due to 
the atom $\dep[y,x]$)  points $c, c'$ in the extension of $C$ such that
\begin{align*}
\hspace{-3mm}\mA \models_{X_2} 
\,\exists x \Big(\dep[x,y] \land \big(\big(\dep[x]&\wedge H(x,y) \big) \lor \big(\dep[x] \wedge V(x,y)\big)\big) \\
&\quad\quad\quad \land \exists y \big(\dep[y] \land \big(V(y,x) \lor H(y,x)\big) \land \neg C(y))\big)\Big),
\end{align*}
where $X_2= \{\{(x,u),(y,c)\},\{(x,u'),(y,c')\}\}$.
The above holds if and only if there exist distinct points $b, b'$ of $\mA$ such that
$H(b,c)$ and $V(b',c')$ (or $V(b,c)$ and $H(b',c')$ in which case the argument is analogous) and
\begin{align*}
\mA \models_{X_3} &
\,\exists y \big(\dep[y] \land \big(V(y,x) \lor H(y,x)\big) \land \neg C(y))\big),
\end{align*}
where $X_3=\{\{(x,b),(y,c)\}, \{(x,b'),(y,c')\}\}$.
The above holds if and only if there exists a point $a$ in $\mA$ such that $\neg C(a)$, ($V(a,b)$ or $H(a,b)$) and ($V(a,b')$ or $H(a,b')$). Since $C(c)$ and $C(c')$ hold, it follows from the assumption that \eqref{ideaofC} holds that $C(b)$ and $\neg C(b')$. Now since also $\neg C(a)$ holds, it follows again from
\eqref{ideaofC} that $V(a,b)$ and $H(a,b')$.
When all of the above is combined, we obtain \eqref{meetcase}.
The analogous condition where $\neg C(c)$ and $\neg C(c')$ is
proved similarly. Since \eqref{ideaofC} holds for $\mathfrak{A}$, any
points $c$ and $c'$ of $\mA$ that violate the last condition of Definition
\ref{def:gridlike}, must agree on $C$. Thus the lemma holds.
\end{proof} 

The next lemma follows from Lemma \ref{joinlemma} together with the observations made earlier in this section.
\begin{lemma}\label{nongridlemma}
Let $\tau=\{V,H,C,U,P,Q\}$ be the vocabulary of striped gridlike structures and
let $\mA$ be a $\tau$-model.
%
%
%
Then $\mA$ is striped
and gridlike iff $\mA\not\models \varphi_\mathit{non-grid}$.
%
%
\end{lemma}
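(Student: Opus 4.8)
The plan is to prove both directions of the biconditional by matching each disjunct of $\varphi_{\mathit{non-grid}}$ against the negation of one defining requirement of striped gridlike structures, and then exploiting the fact that a disjunction fails exactly when every disjunct fails. Recall that $\varphi_{\mathit{non-grid}} = \varphi_{\mathit{non-serial}} \lor \varphi_{\mathit{\lvert U \rvert \not= 2}} \lor \varphi_{\mathit{non-stripes}} \lor \varphi_{\mathit{non-join}}$, so $\mA \not\models \varphi_{\mathit{non-grid}}$ holds if and only if $\mA$ satisfies none of the four disjuncts.

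First I would record the three easy correspondences, which follow directly from the first-order semantics of the relevant formulas and from the observations already made in this section: (i) $\mA \models \varphi_{\mathit{non-serial}}$ iff the extension of $V$ or of $H$ fails to be serial, i.e., iff condition 1 or 2 of Definition \ref{def:gridlike} fails; (ii) $\mA \models \varphi_{\mathit{\lvert U \rvert \not= 2}}$ iff it is \emph{not} the case that the extensions of $P$ and $Q$ are distinct singletons with $U$ their union (this is exactly the observation noted just after the definition of $\varphi_{\mathit{\lvert U \rvert \not= 2}}$), and in particular its negation guarantees $|U| = 2$; and (iii) $\mA \models \varphi_{\mathit{non-stripes}}$ iff the stripe condition \eqref{ideaofC} fails.

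The remaining ingredient is the join condition (condition 3 of Definition \ref{def:gridlike}), for which I would invoke Lemma \ref{joinlemma}. The point requiring care --- and the main subtlety of the argument --- is that Lemma \ref{joinlemma} delivers the equivalence ``$\mA \models \varphi_{\mathit{non-join}}$ iff the join condition fails'' only under the standing hypotheses that $|U| = 2$ and that \eqref{ideaofC} holds. Hence the lemma cannot be applied in isolation; its applicability rests on the failure of the earlier disjuncts. This dictates the order of the argument in the right-to-left direction: from $\mA \not\models \varphi_{\mathit{non-grid}}$ I would first extract, via the failure of $\varphi_{\mathit{\lvert U \rvert \not= 2}}$ and $\varphi_{\mathit{non-stripes}}$, that $|U| = 2$ and that \eqref{ideaofC} holds; only then are the hypotheses of Lemma \ref{joinlemma} in place, so that the failure of $\varphi_{\mathit{non-join}}$ yields the join condition.

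Assembling the two directions is then routine. For the forward direction, if $\mA$ is striped and gridlike then all defining requirements hold, so by (i)--(iii) and Lemma \ref{joinlemma} (whose hypotheses $|U| = 2$ and \eqref{ideaofC} are guaranteed) each disjunct fails, whence $\mA \not\models \varphi_{\mathit{non-grid}}$. For the converse, if $\mA \not\models \varphi_{\mathit{non-grid}}$ then every disjunct fails: (i) gives seriality of $V$ and $H$, (ii) gives that $P,Q$ are distinct singletons with $U$ their union (so $|U| = 2$), (iii) gives \eqref{ideaofC}, and finally Lemma \ref{joinlemma}, now applicable, gives the join condition. Thus all conditions of Definition \ref{def:gridlike} together with the striping requirements hold, so $\mA$ is striped and gridlike. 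No hard computation is involved; the only genuine care needed is to sequence the use of Lemma \ref{joinlemma} after the cardinality and stripe conditions have been secured.
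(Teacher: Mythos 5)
Your proposal is correct and takes essentially the same route as the paper, which gives no detailed proof but states only that the lemma ``follows from Lemma \ref{joinlemma} together with the observations made earlier in this section'': your assembly---matching each disjunct of $\varphi_{\mathit{non\text{-}grid}}$ to the negation of one defining condition, and invoking Lemma \ref{joinlemma} only after the failure of $\varphi_{\mathit{\lvert U \rvert \not= 2}}$ and $\varphi_{\mathit{non\text{-}stripes}}$ has secured its hypotheses $\lvert U\rvert = 2$ and \eqref{ideaofC}---is precisely the intended argument spelled out. Your emphasis on the sequencing of Lemma \ref{joinlemma} is exactly the one point of genuine care here, and the rest (including that a disjunction of sentences fails iff every disjunct fails, which holds by the empty team property) is routine, as you say.
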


\begin{theorem}
The validity problem for $\dtwo$ is undecidable (more precisely, $\Sigma^0_1$-hard).
\end{theorem}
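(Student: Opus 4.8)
The plan is to reduce the non-tiling problem to the validity problem of $\dtwo$, using the machinery developed in the preceding sections. Given an input $T$ to the non-tiling problem, I would construct a $\dtwo$-sentence $\Phi_T$ over the vocabulary $\sigma := \tau \cup T$ (where $\tau = \{V,H,C,U,P,Q\}$ is the vocabulary of striped gridlike structures) and show that $\Phi_T$ is valid if and only if the grid $\mG$ is \emph{not} $T$-tilable. Since the non-tiling problem is $\Sigma^0_1$-complete by Corollary~\ref{nontilingc}, and since $\Phi_T$ is computable from $T$, this reduction establishes $\Sigma^0_1$-hardness of the validity problem.

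The natural candidate is $\Phi_T := \varphi_{\mathit{non-grid}} \lor \varphi_T$, where $\varphi_{\mathit{non-grid}}$ is the $\dtwo$-formula from Lemma~\ref{nongridlemma} and $\varphi_T$ is the $\fotwo$-sentence from Lemma~\ref{tilingdefinablelemma} expressing non-$T$-tilability. First I would argue the right-to-left direction: suppose $\mG$ is not $T$-tilable. Let $\mA$ be an arbitrary $\sigma$-structure. If $\mA$ is not striped and gridlike, then $\mA \models \varphi_{\mathit{non-grid}}$ by Lemma~\ref{nongridlemma}, so $\mA \models \Phi_T$. If $\mA$ \emph{is} striped and gridlike, then by Lemma~\ref{likestogrids} its $\{H,V\}$-reduct is non-$T$-tilable, so by Lemma~\ref{tilingdefinablelemma} we have $\mA \models \varphi_T$, hence $\mA \models \Phi_T$. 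Either way $\mA \models \Phi_T$, so $\Phi_T$ is valid.

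For the left-to-right direction, I would argue the contrapositive: suppose $\mG$ \emph{is} $T$-tilable. By Lemma~\ref{likestogrids}, there is a striped gridlike $\sigma$-structure whose reduct is $T$-tilable; take such an $\mA$ together with a witnessing tiling, so that $\mA \models \neg\varphi_T$ for the appropriate expansion. Since $\mA$ is striped and gridlike, $\mA \not\models \varphi_{\mathit{non-grid}}$ by Lemma~\ref{nongridlemma}. The delicate point is that $\varphi_T$ asserts non-tilability quantifying over \emph{all} expansions to the tile predicates (Lemma~\ref{tilingdefinablelemma}), so I must ensure that the tiling witnessing $T$-tilability of $\mA$ is realized as a genuine expansion falsifying $\varphi_T$; here the precise formulation of Lemma~\ref{tilingdefinablelemma} must be matched carefully against the existential reading of tilability. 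Assembling an expansion $\mA^*$ of $\mA$ to $\sigma$ that is both striped-gridlike and carries a valid $T$-tiling yields $\mA^* \not\models \Phi_T$, so $\Phi_T$ is not valid.

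The main obstacle I anticipate is the interaction between the two disjuncts over the shared vocabulary, and in particular reconciling the quantifier structure of Lemma~\ref{tilingdefinablelemma} (which speaks of all tile-expansions) with the team-semantic disjunction in $\Phi_T$. Because $\dtwo$ has downward closure (Proposition~\ref{Downward closure}) and the truth of a sentence is evaluated at the team $\{\emptyset\}$, I would verify that the disjunction $\varphi_{\mathit{non-grid}} \lor \varphi_T$ behaves as a genuine logical ``or'' at the level of the empty-domain team: one must check that the splitting of $\{\emptyset\}$ demanded by the team semantics of $\vee$ does not introduce unexpected slack, and that the $\fotwo$-sentence $\varphi_T$ retains its classical meaning under team semantics via Proposition~\ref{whatever}. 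Once these semantic bookkeeping points are settled, the equivalence ``$\Phi_T$ valid iff $\mG$ non-$T$-tilable'' follows, and the reduction from the $\Sigma^0_1$-complete non-tiling problem completes the proof.
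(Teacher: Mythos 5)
Your proposal is correct and takes essentially the same approach as the paper: the identical reduction sentence $\varphi_{\mathit{non-grid}}\lor\varphi_T$ from the $\Sigma^0_1$-complete non-tiling problem, using Lemmas \ref{tilingdefinablelemma}, \ref{likestogrids} and \ref{nongridlemma} in the same roles, with the paper merely organizing the argument as one chain of equivalences over reducts and expansions where you argue the two directions (direct and contrapositive) separately. The bookkeeping point you flag---that team-semantic $\vee$ behaves classically on sentences evaluated at $\{\emptyset\}$---is precisely the step the paper dispatches with ``since $\varphi_{\mathit{non-grid}}$ and $\varphi_T$ are sentences,'' justified by the empty team property.
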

\begin{proof}
We give a computable reduction from the non-tiling problem to the validity problem of $\dtwo$. Since the former is $\Sigma^0_1$-complete (Corollary \ref{nontilingc}), we obtain $\Sigma^0_1$-hardness for the latter.

If $T$ is an input to the non-tiling problem, then $\varphi_T$ denotes the $\fotwo$-sentence given by Lemma \ref{tilingdefinablelemma} and $\varphi_{\mi{non-T-tiling}}\dfn(\varphi_{\mathit{non-grid}}\lor \varphi_T)$. 
Let $\tau$ be as defined in Lemma \ref{nongridlemma}.
Let $C_{\tau,T}$ denote the class of all $\tau\cup T$-structures and let $\cC^{\tau,T}_\mi{s-gridlike}$ be the class of exactly all expansions of striped gridlike structures to the vocabulary $\tau\cup T$.  
Let $T$ be an input to the non-tiling problem. We will show that the grid is non-T-tilable iff the $\dtwo$-sentence  $\varphi_{\mi{non-T-tiling}}$ is valid.
By definition,
$\varphi_{\mi{non-T-tiling}}$ is valid iff
\(
\mA \models \varphi_{\mathit{non-grid}}\lor \varphi_T \text{ holds for every $\mA\in \cC_{\tau,T}$.}
\)
Since $\varphi_{\mathit{non-grid}}$ and $\varphi_T$ are sentences, the right-hand side of this equivalence is equivalent to the claim that 
\begin{equation}\label{sglike0}
\forall \mA\in \cC_{\tau,T}: \mA \models \varphi_{\mathit{non-grid}} \text{ or  } \mA \models \varphi_{T}.
\end{equation}
By Lemma \ref{nongridlemma}, $\mB^*\models \varphi_{\mathit{non-grid}}$ holds for every  $\tau$-reduct $\mB^*$ of $\mB\in \cC_{\tau,T}$ that is not striped and gridlike. Hence for every  $\mB\in \cC_{\tau,T}$ such that the  $\tau$-reduct $\mB^*$ of $\mB$ is not striped and gridlike, it holds that  $\mB\models \varphi_{\mathit{non-grid}}$. Thus \eqref{sglike0} is equivalent to the claim that
\begin{equation}\label{sglike1}
\forall \mA\in \cC^{\tau,T}_\mi{s-gridlike}: \mA \models \varphi_{\mathit{non-grid}} \text{ or  } \mA \models \varphi_{T}.
\end{equation}
Now let $\mB$ be an arbitrary striped and gridlike $\tau$-structure. By Lemma \ref{nongridlemma}, $\mB\not\models \varphi_{\mathit{non-grid}}$. Thus $\mB^*\not\models \varphi_{\mathit{non-grid}}$ for every expansion $\mB^*$ of $\mB$ to the vocabulary $\tau\cup T$. From this it follows that \eqref{sglike1} is equivalent to the claim that
\begin{equation}\label{sglike2}
\forall \mA\in \cC^{\tau,T}_\mi{s-gridlike}: \mA \models \varphi_{T}.
\end{equation}
Thus, by Lemma \ref{tilingdefinablelemma}, \eqref{sglike2} holds if and only if every striped gridlike structure is non-$T$-tilable. Finally, from Lemma \ref{likestogrids} it follows that this is equivalent to the claim that the grid is non-$T$-tilable.
\end{proof}

\section{Satisfiability of $\exists^*\forall^*$-formulas}\label{S4}

\vspace{-2mm}

In this section we consider the complexity of  satisfiability for  sentences  of dependence logic and its variants in the prefix class $\exists^*\forall^*$. 
For first-order logic, the satisfiability and finite satisfiability problems of  the  prefix class $\exists^*\forall^*$
are known to be  $\NEXPTIME$-complete.
The results hold for both the case with equality and the case without equality, see \cite{graadeli}.
%

Let $\mathcal{A}$ be a collection of generalized atoms. We denote by $\exists^*\forall^* [\mathcal{A}]$ the class of sentences of $\fo(\mathcal{A})$ of the form
$
\exists x_0\cdots \exists x_n \forall y_0\cdots \forall y_m \theta,
$
where $\theta$ is a quantifier-free formula whose generalized atoms are in $\mathcal{A}$. 
%
It is worth noting that, depending on the set  $\mathcal{A}$, the expressive power and complexity of sentences in  $\exists^*\forall^* [\mathcal{A}]$ can vary considerably even when $\mathcal{A}$ is finite and contains
only computationally non-complex atoms.
For example, there are universal sentences of dependence logic that define NP-complete problems \cite{Kontinen13}. Furthermore,  every sentence of inclusion logic is equivalent to a sentence with a prefix of the form  $\exists^*\forall^1$ \cite{Hannula14} implying that
the satisfiability problem of the $\exists^*\forall^*$-fragment of inclusion logic is undecidable.


Recall that we say that  a formula $\varphi$ is \emph{closed under substructures} if for all $\mathfrak{A}$ and $X$
it holds that if
$\mathfrak{A}\models_X \varphi $, $\mathfrak{A}':=\mathfrak{A}\upharpoonright B$ and $X':=X\upharpoonright B$ for some
$B\subseteq A$, then we have $\mathfrak{A}'\models_{X'} \varphi $. 
%
\begin{lemma}\label{AE-lemma} Let $\mathcal{A}$ be a collection of generalized  atoms that are closed under substructures.  Then the following conditions hold.
\begin{enumerate}
\item\label{A} Suppose  $\varphi\in \fo[\mathcal{A}]$ is of the form $\forall y_0\cdots \forall y_m \theta$, where $\theta$ is quantifier-free. Then $\varphi$  is closed under substructures.     
\item\label{EA} Let $\varphi\in \exists^*\forall^*[\mathcal{A}]$ be a sentence. Then, if $\varphi$ is satisfiable, $\varphi$ has a model with at most $max\{1,k\}$ elements, where $k$ refers to the number of existentially quantified variables in $\varphi$.  
\end{enumerate}
\end{lemma}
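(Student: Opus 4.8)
The plan is to prove the two parts essentially independently, with part~\ref{A} providing the conceptual content and part~\ref{EA} following by a standard substructure-selection argument. For part~\ref{A}, I would proceed by induction on the structure of the quantifier-free matrix $\theta$, lifting the closure-under-substructures property from the atomic level up through the connectives and then through the block of universal quantifiers. The base cases are the first-order literals and the generalized atoms $A_Q(\overline{y}_1,\dots,\overline{y}_k)$; for the literals, closure is immediate from Proposition~\ref{whatever} (first-order satisfaction in a team is just pointwise Tarskian satisfaction, which is preserved under restricting to a substructure), and for the generalized atoms it holds by the hypothesis that every atom in $\mathcal{A}$ is closed under substructures. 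The inductive steps for $\wedge$ and $\vee$ use the definition of team semantics: conjunction splits into two subformulas on the \emph{same} team, so restricting $X$ to $X\upharpoonright B$ and using the induction hypothesis on each conjunct works directly; disjunction requires producing a split $X\upharpoonright B = Y' \cup Z'$ of the restricted team, which I would obtain by taking $Y' := Y\upharpoonright B$ and $Z' := Z\upharpoonright B$ from the original split $X = Y \cup Z$ and checking $(Y\cup Z)\upharpoonright B = (Y\upharpoonright B)\cup(Z\upharpoonright B)$.

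The genuinely delicate step, which I expect to be the main obstacle, is the universal quantifier case: I must show that $\forall y\,\psi$ is closed under substructures given that $\psi$ is. The subtlety is that the semantic clause for $\forall y$ supplies \emph{all} elements of the current codomain, so passing to $\mA' = \mA\upharpoonright B$ changes which witnesses the quantifier ranges over. Concretely, from $\mA\models_X \forall y\,\psi$ we have $\mA\models_{X[A/y]}\psi$, and I want $\mA'\models_{X'[B/y]}\psi$ where $X' = X\upharpoonright B$. The key observation is that $X'[B/y] = (X[A/y])\upharpoonright B$: restricting $X[A/y]$ to assignments with all values in $B$ keeps exactly those $s[a/y]$ with $s\in X\upharpoonright B$ and $a\in B$. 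Thus the induction hypothesis applied to the subteam $X[A/y]$ and the set $B$ yields precisely $\mA'\models_{X'[B/y]}\psi$, i.e. $\mA'\models_{X'}\forall y\,\psi$. I would verify this team-restriction identity carefully, as it is the load-bearing computation; note that the definition of $\upharpoonright$ given in the preliminaries, which requires \emph{every} variable in the domain to take a value in $C$, is exactly what makes the identity go through.

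For part~\ref{EA}, I would argue as follows. Write the satisfiable sentence as $\varphi = \exists x_0\cdots\exists x_{k-1}\,\psi$ with $\psi = \forall y_0\cdots\forall y_m\,\theta$ universal, and suppose $\mA\models\varphi$, that is $\mA\models_{\{\emptyset\}}\varphi$. Unwinding the $k$ existential quantifiers via the lax semantics produces a team $X$ over $\mA$ of the form $\{\emptyset\}[F_0/x_0]\cdots[F_{k-1}/x_{k-1}]$ with $\mA\models_X\psi$; each existential witness function may select a \emph{set} of values, but to build a small model I would select, for each existential step, a single witness, obtaining an assignment $s$ on $\{x_0,\dots,x_{k-1}\}$ with $\mA\models_{\{s\}}\psi$ (using downward closure where needed, or simply choosing a singleton subteam). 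Let $B := \{s(x_0),\dots,s(x_{k-1})\}$, so $|B|\le k$, and when $k=0$ take $B$ to be any single element of $A$ to respect the nonempty-domain convention, giving the bound $\max\{1,k\}$. Since $\psi$ is universal, part~\ref{A} applies: setting $\mA' := \mA\upharpoonright B$ and $X' := \{s\}\upharpoonright B = \{s\}$ (as all values of $s$ lie in $B$), we get $\mA'\models_{\{s\}}\psi$, whence $\mA'\models\varphi$ by reintroducing the existential quantifiers with witnesses in $B$. The only points needing care are the empty-domain convention (handled by the $\max\{1,k\}$) and confirming that selecting single witnesses from the (possibly multi-valued) functions $F_i$ preserves satisfaction of the universal part, which is immediate since restricting to a subteam and applying part~\ref{A} both only shrink the team.
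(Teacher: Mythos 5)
Your part~\ref{A} is correct and essentially identical to the paper's argument: induction on the quantifier-free matrix with the restricted split $Y\upharpoonright B$, $Z\upharpoonright B$ for $\vee$, followed by the identity $(X[A/y_0]\cdots[A/y_m])\upharpoonright B=(X\upharpoonright B)[B/y_0]\cdots[B/y_m]$ for the universal block. The gap is in part~\ref{EA}, at the step where you pass from the team $X=\{\emptyset\}[F_0/x_0]\cdots[F_{k-1}/x_{k-1}]$ to the singleton team $\{s\}$, claiming $\mA\models_{\{s\}}\psi$ ``using downward closure where needed, or simply choosing a singleton subteam.'' Downward closure is not available: the lemma assumes only that the atoms in $\mathcal{A}$ are closed under substructures, and this does not imply downward closure. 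Substructure closure licenses only the passage from $(\mA,X)$ to $(\mA\upharpoonright B, X\upharpoonright B)$; taking $B=A$ it says nothing about proper subteams of $X$ over the \emph{same} structure. (A concrete separating example: the type-$(1)$ atom given by the isomorphism-closed class $Q=\{(A,R): \lvert A\rvert<5\}\cup\{(A,R): \lvert A\rvert=5 \text{ and } R=A\}$ is closed under substructures but not downwards closed.) Accordingly, Proposition~\ref{Downward closure} cannot be invoked here, since its hypothesis is precisely that the atoms are downwards closed. Nor can you ``simply choose a singleton subteam'': the lax semantics fixes the witnessing functions $F_i$, and replacing them by singleton-valued selections generates a \emph{different} team $\{s\}$, on which $\psi=\forall y_0\cdots\forall y_m\,\theta$ need not hold, again exactly because $\theta$ may contain non-downward-closed atoms. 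Your closing remark that ``restricting to a subteam and applying part~\ref{A} both only shrink the team'' conflates the two operations; only the substructure restriction is sound under the stated hypotheses.

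The repair is what the paper actually does: never descend to $\{s\}$. Pick any $s\in X$, set $B:=\rng(s)$ (and $B:=\{b\}$ for an arbitrary $b\in A$ if $s=\emptyset$, which covers the $\max\{1,k\}$ bound), and apply part~\ref{A} to the \emph{full} team $X$: since $\forall y_0\cdots\forall y_m\,\theta$ is closed under substructures, $\mA\upharpoonright B\models_{X\upharpoonright B}\forall y_0\cdots\forall y_m\,\theta$, where $X\upharpoonright B$ contains $s$ (hence is nonempty) but may well be strictly larger than $\{s\}$ --- and that is harmless. It remains to re-witness the existential quantifiers over $\mA\upharpoonright B$, e.g.\ via the pruned functions $G_i(t):=\{a\in F_i(t)\cap B : t[a/x_i]\text{ extends to a member of } X\upharpoonright B\}$, which are nonempty on all prefixes of members of $X\upharpoonright B$ and generate exactly $X\upharpoonright B$; hence $\mA\upharpoonright B\models\varphi$ with $\lvert B\rvert\le\max\{1,k\}$. (The paper compresses this last re-witnessing step into ``thus it follows,'' but it is the team $X\upharpoonright B$, not $\{s\}$, that makes the argument go through.)
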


\begin{proof}
We will first prove claim \eqref{A}. Suppose that $\varphi:=\forall y_0\cdots \forall y_m \theta$. We will first show the claim for quantifier-free formulas $\theta$, i.e., we will show that for all  $\mathfrak{A}$, $X$, $\mathfrak{A}'$, and  $X'$
such that $\mathfrak{A}':=\mathfrak{A}\upharpoonright B$ and $X':=X\upharpoonright B$ for some $B\subseteq A$,
%
%
%
the following implication holds.
\begin{equation}\label{theta}
\mA\models_{X} \theta \Rightarrow \mA'\models_{X'}\theta.
\end{equation}
The claim obviously holds if $\theta$ is a first-order literal. If $\theta$ is a generalized atom from $\mathcal{A}$,
then the claim holds by assumption.
The case $\theta := \psi_1\wedge \psi_2$ follows immediately from the induction hypothesis.
Let us then assume that  $\theta := \psi_1\vee \psi_2$. Since $\mA\models_{X} \theta$, there are sets $Y$ and $Z$ such that
 $Y\cup Z=X$, $\mA\models_{Y} \psi_1$ and $\mA\models_{Z} \psi_2$.  By the induction hypothesis, we have 
$\mA'\models_{Y'} \psi_1$ and $\mA'\models_{Z'} \psi_2$, where $Y':=Y\upharpoonright B$  and  $Z':=Z\upharpoonright B$.
Since $Y'\cup Z'=X'$, it follows that $\mA'\models_{X'} \theta$.
We will now show that the claim also holds for $\varphi$. Suppose that $\mA\models_{X} \varphi$.
Then, by the truth definition, 
 \(\mA\models_{X[A/y_0]\cdots [A/y_m]}\theta. \)
\noindent Using \eqref{theta}, we have 
  \(\mA'\models_{(X[A/y_0]\cdots [A/y_m])\upharpoonright B}\theta. \)
It is easy to check that 
\( (X[A/y_0]\cdots [A/y_m])\upharpoonright B= (X\upharpoonright B) [B/y_0]\cdots [B/y_m].\)
Hence we have
\(\mA'\models_{X'}\varphi. \)

Let us then prove \ref{EA}.
Assume $\varphi$ is a sentence of the form
$\exists x_0\cdots \exists x_n \forall y_0\cdots \forall y_m \theta,$
where $\theta$ is quantifier-free,
and that there is a structure  $\mA$ such that $\mA\models \varphi$. Hence there exists functions $F_i$ such that
$\mA\models_ X \forall y_0\cdots \forall y_m \theta,$
where $X=\{\emptyset\}[F_0/x_0]\cdots [F_n/x_n]$. Let $s$ be some assignment in $X$.
Let $\rng(s)$ denote the set of elements $b$ such that 
$s(x) = b$ for some variable $x$ in the domain of $s$.
If $\rng(s)\not=\emptyset$ define $B := \rng(s)$, and if
$\rng(s) = \emptyset$ (i.e., $s = \emptyset$), define $B=\{b\}$, where $b$ is
an arbitrary element in $A$.
By claim \eqref{A}, the formula
$\forall y_0\cdots \forall y_m \theta$ is closed under substructures. Thus
$
\mA\upharpoonright B \models_{X\upharpoonright B} \forall y_0\cdots \forall y_m \theta.
$
Thus it follows that
\(
\mA\upharpoonright B \models \varphi.
\)
\end{proof}

A generalized atom $A_{Q}$ is said to be \emph{polynomial time computable} if the question whether $\mathfrak{A}\models_X A_{Q}(\overline{y}_1,...,\overline{y}_n)$ holds can be decided in time polynomial in the size of $\mA$ and $X$.
A class of atoms $\mathcal{A}$ is said to be \emph{uniformly polynomial time computable} if there exists a polynomial function $f:\mathbb{N}\to\mathbb{N}$ such that for every atom $A_Q\in \mathcal{A}$ it holds that the question whether $\mathfrak{A}\models_X A_{Q}(\overline{y}_1,...,\overline{y}_n)$ holds can be
decided in time $f\bigl(\lvert\mA\rvert+\lvert X\rvert + \vert A_Q(\overline{y}_1,...,\overline{y}_n)\vert\, \bigr)$.
Note that every finite class of polynomial time computable atoms is also uniformly polynomial time computable.
%
%

%
The following theorem now follows from Lemma \ref{AE-lemma}. We will make use of
the recent result of Gr\"adel showing that for a
uniformly polynomial time computable collection $\mathcal{A}$ of atoms, 
the model checking problem for $\fo (\mathcal{A})$-formulas  is in  $\NEXPTIME$ \cite{DBLP:journals/tcs/Gradel13}.
\begin{theorem}\label{EAcomplete} Let  $A_{Q}$  be a generalized atom that is closed under substructures and polynomial time computable. Then $\sat(\exists^*\forall^* [A_Q])$ and $\finsat(\exists^*\forall^* [A_Q])$ are
in $2\NEXPTIME$. If $\tau$ is a vocabulary consisting of relation symbols of arity at most $k$, $k\in \mathbb{Z}_+$, then $\sat(\exists^*\forall^* [A_Q](\tau))$ and $\finsat(\exists^*\forall^* [A_Q](\tau))$ are 
 $\NEXPTIME$-complete.
 \end{theorem}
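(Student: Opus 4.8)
The plan is to combine the small-model property established in Lemma~\ref{AE-lemma}\eqref{EA} with Gr\"adel's model-checking result for $\fo(\mathcal{A})$. First I would prove the $2\NEXPTIME$ upper bound for the unrestricted case. Given a sentence $\varphi\in\exists^*\forall^*[A_Q]$ with $k$ existentially quantified variables, Lemma~\ref{AE-lemma}\eqref{EA} guarantees that if $\varphi$ is satisfiable then it has a model of size at most $\max\{1,k\}$, and $k$ is linear in $\lvert\varphi\rvert$. The decision procedure nondeterministically guesses a structure $\mA$ on a domain of size at most $\max\{1,k\}$. The number of relations to guess is bounded by the vocabulary, but since the arity is unbounded here, a single $r$-ary relation over a domain of size $n$ has $n^{r}$ possible tuples; with $n\le k$ and $r\le\lvert\varphi\rvert$ this gives a structure whose size is singly exponential in $\lvert\varphi\rvert$, so the guess can be made in $\NEXPTIME$ (in the doubly-exponential-sized object). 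One then verifies $\mA\models\varphi$ using the model-checking algorithm of \cite{DBLP:journals/tcs/Gradel13}, which runs in time polynomial-exponential in $\lvert\mA\rvert+\lvert\varphi\rvert$; since $\lvert\mA\rvert$ is already exponential in $\lvert\varphi\rvert$, model checking runs in time doubly exponential in $\lvert\varphi\rvert$. Hence the whole procedure is in $2\NEXPTIME$. Here the polynomial-time computability of $A_Q$ is what ensures that the atomic clauses of $\varphi$ do not blow up the model-checking cost, so that Gr\"adel's bound applies. The same argument applies verbatim to $\finsat$, since the small model produced is finite.

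Next I would treat the fixed-arity case to sharpen the bound to $\NEXPTIME$-completeness. When $\tau$ has arity at most $k$ for a fixed constant $k$, a domain of size $n\le\max\{1,\lvert\varphi\rvert\}$ yields relations of size at most $n^{k}$, so the entire guessed structure $\mA$ has size polynomial in $\lvert\varphi\rvert$. The nondeterministic guess is then of polynomial size, and Gr\"adel's model-checking procedure runs in time $\NEXPTIME$ in $\lvert\mA\rvert+\lvert\varphi\rvert$, i.e. in nondeterministic time exponential in $\lvert\varphi\rvert$; the combined guess-and-check procedure stays in $\NEXPTIME$. For the matching lower bound, I would observe that $\fo\le\fo(A_Q)$ and that the satisfiability and finite satisfiability problems of the first-order $\exists^*\forall^*$ prefix class over a fixed-arity relational vocabulary are already $\NEXPTIME$-hard by \cite{graadeli}; this hardness transfers immediately to $\exists^*\forall^*[A_Q](\tau)$ since a first-order $\exists^*\forall^*$-sentence is trivially a sentence of $\exists^*\forall^*[A_Q](\tau)$ with no occurrences of the atom.

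The step I expect to require the most care is keeping the complexity bookkeeping honest in the unrestricted-arity case. The subtlety is that although the \emph{domain} of the witness model is small (of size linear in $\lvert\varphi\rvert$), the \emph{encoding} of a structure of that domain can be exponential because relation arities are unbounded, and one must confirm that guessing such an object and running the model checker keeps the overall cost at the second level of the exponential-time hierarchy rather than higher. I would therefore state explicitly that the representation of $\mA$ has size singly exponential in $\lvert\varphi\rvert$ and that composing a single-exponential-sized guess with a single-exponential-time (in the input to the checker) verification yields a doubly-exponential nondeterministic bound, landing precisely in $2\NEXPTIME$. Once the arity is capped, this blow-up disappears and the bound collapses to $\NEXPTIME$, where the first-order lower bound already matches; this is why the two regimes of the theorem behave differently, and making that dependence on arity transparent is the crux of the proof.
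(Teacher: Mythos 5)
Your proposal is correct and follows essentially the same route as the paper's own proof: the small-model property from Lemma~\ref{AE-lemma}, a nondeterministic guess-and-check procedure invoking Gr\"adel's \NEXPTIME{} model-checking result \cite{DBLP:journals/tcs/Gradel13}, the observation that the encoding of the guessed structure is singly exponential for unbounded arity (giving $2\NEXPTIME$) but polynomial for fixed arity (giving \NEXPTIME), and the lower bound inherited from the first-order $\exists^*\forall^*$ prefix class. Your bookkeeping about encoding size versus domain size matches the paper's argument exactly, so no changes are needed.
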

\begin{proof} Note first that the lower bounds follow from the fact
that both $\sat(\exists^*\forall^*)$ and $\finsat(\exists^*\forall^*)$
are already  $\NEXPTIME$-complete.
It hence suffices to show containments in $2\NEXPTIME$ and $\NEXPTIME$, respectively.
Let $\varphi \in \exists^*\forall^* [A_Q]$. By Lemma \ref{AE-lemma}, $\varphi$ is satisfiable if and only if  it has a model of cardinality at most $|\varphi |$. We can decide satisfiability of $\varphi$ as follows:  non-deterministically guess a structure $\mA$ of cardinality at most $|\varphi |$
and accept iff $\mA \models \varphi$. By the result of Gr\"{a}del in \cite{DBLP:journals/tcs/Gradel13},
the question whether $\mA \models \varphi$ can be checked non-deterministically in
exponential time with input $\mA$ and  $\varphi$.
Assume first that the maximum arity of relation symbols that may occur in $\varphi$ is not a fixed constant.
Relation symbols of arity at most $\lvert \varphi\rvert$ may occur in $\varphi$.
Thus the size of the \emph{binary encoding of a model}
$\mathfrak{A}$ of $\varphi$ such that $A \leq |\varphi|$
is worst case exponential with
respect to $\lvert \varphi\rvert$.
If, on the other hand, the maximum arity of relation symbols that can occur in $\varphi$ is a fixed constant,
then the size of the encoding of $\mA$ is just worst case polynomial with respect to $\lvert \varphi\rvert$.
Therefore it follows that our algorithm for checking satisfiability of $\varphi$ is in $\NEXPTIME$ in the case of fixed arity vocabularies and in $2\NEXPTIME$ in the general case. The corresponding results for the finite satisfiability problem follow by the observation that $\exists^*\forall^* [A_Q]$ has the finite model property, Lemma \ref{AE-lemma}.
\end{proof}
\begin{corollary} Let $\mathcal{A}$ be a
uniformly polynomial time computable
class of generalized atoms that are closed under substructures.
Then $\sat(\exists^*\forall^* [\mathcal{A}])$ and $\finsat(\exists^*\forall^* [\mathcal{A}])$ are 
 in $2\NEXPTIME$. If $\tau$ is a vocabulary consisting of relation symbols of arity at most $k$, $k\in \mathbb{Z}_+$, then $\sat(\exists^*\forall^* [\mathcal{A}](\tau))$ and $\finsat(\exists^*\forall^* [\mathcal{A}](\tau))$ are 
 $\NEXPTIME$-complete.
 \end{corollary}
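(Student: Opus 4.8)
The plan is to adapt the proof of Theorem~\ref{EAcomplete} essentially verbatim, replacing the single atom $A_Q$ by the class $\mathcal{A}$; all the ingredients needed for the class version have already been set up in the excerpt. First I would record the lower bounds: every $\fo$-sentence in the prefix class $\exists^*\forall^*$ is also a sentence of $\exists^*\forall^*[\mathcal{A}]$ (one simply uses no atoms from $\mathcal{A}$), so from the $\NEXPTIME$-hardness of $\sat(\exists^*\forall^*)$ and $\finsat(\exists^*\forall^*)$ the $\NEXPTIME$-hardness of all four problems follows immediately, in both the general and the fixed-arity setting. It therefore suffices to establish the claimed containments.

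For the upper bounds, the crucial point is that Lemma~\ref{AE-lemma} is already stated for an \emph{arbitrary} collection $\mathcal{A}$ of generalized atoms closed under substructures, so it directly supplies the small model property: any satisfiable $\varphi\in\exists^*\forall^*[\mathcal{A}]$ has a model of cardinality at most $\lvert\varphi\rvert$. I would then run exactly the same nondeterministic procedure as in Theorem~\ref{EAcomplete}, namely guess a structure $\mA$ of cardinality at most $\lvert\varphi\rvert$ and accept iff $\mA\models\varphi$. The only place where the hypothesis that $\mathcal{A}$ is \emph{uniformly} polynomial time computable (rather than a single polynomial-time atom) is needed is in invoking Gr\"adel's result \cite{DBLP:journals/tcs/Gradel13}: it guarantees that the model checking problem for $\fo(\mathcal{A})$ is in $\NEXPTIME$ for precisely such classes, so the acceptance test can be carried out nondeterministically in exponential time in $\lvert\mA\rvert+\lvert\varphi\rvert$.

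The remaining complexity bookkeeping is identical to that of Theorem~\ref{EAcomplete}. In the general case, relation symbols of arity up to $\lvert\varphi\rvert$ may occur, so the binary encoding of a model on at most $\lvert\varphi\rvert$ elements can be of size exponential in $\lvert\varphi\rvert$; combined with the $\NEXPTIME$ model check this yields membership in $2\NEXPTIME$. When the arity is bounded by a fixed constant $k$, the encoding of such a model is only polynomial in $\lvert\varphi\rvert$, giving $\NEXPTIME$. The finite satisfiability statements follow at once, since by Lemma~\ref{AE-lemma} the fragment has the finite model property and so $\sat$ and $\finsat$ coincide here. I do not expect any genuine obstacle: the mildest point to verify carefully is simply that both Lemma~\ref{AE-lemma} and Gr\"adel's theorem are genuinely stated for classes of atoms rather than a single atom, and since both are, the corollary is essentially an immediate specialisation of the single-atom argument.
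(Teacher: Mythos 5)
Your proposal is correct and follows exactly the route the paper intends: the corollary is stated without a separate proof precisely because it is the proof of Theorem~\ref{EAcomplete} rerun with the class $\mathcal{A}$, using that Lemma~\ref{AE-lemma} is already formulated for collections of atoms and that Gr\"adel's model-checking result in \cite{DBLP:journals/tcs/Gradel13} is stated for uniformly polynomial time computable classes. Your bookkeeping of the encoding sizes (exponential in the unbounded-arity case, polynomial for fixed arity) and the appeal to the finite model property for \finsat match the paper's argument verbatim.
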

In the following sense Theorem \ref{EAcomplete} is optimal: there exists a polynomial time computable generalized atom $A_Q$ such that $\sat(\exists^3\forall[A_Q])$ and $\finsat(\exists^3\forall[A_Q])$ are undecidable. This already holds for vocabularies with at least one binary relation symbol and a countably infinite set of unary relation symbols. Let $\varphi_{\mathit{5-inc}}:= \forall x_1\dots\forall x_5\big(R(x_1,\dots,x_5)\rightarrow S(x_1,\dots x_5) \big)$, and let $A_{\mathit{5-inc}}$ be the related generalized atom of the type $(5,5)$, i.e., $A_{\mathit{5-inc}}$ is the $5$-ary inclusion atom interpreted as a generalized atom. Clearly $A_{\mathit{5-inc}}$ is computable in polynomial time.
%
\begin{theorem}\label{whatevertheorem}
Let $\tau$ be a vocabulary consisting of one binary relation symbol and a countably infinite set of unary relation symbols. Then
both $\sat(\exists^3\forall[A_{\mathit{5-inc}}](\tau))$ and $\finsat(\exists^3\forall[A_{\mathit{5-inc}}](\tau))$ are undecidable.
\end{theorem}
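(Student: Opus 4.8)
The plan is to reduce an undecidable problem to the satisfiability of $\exists^3\forall[A_{\mathit{5-inc}}]$ over the stated vocabulary $\tau$. Since the standard $\exists^*\forall^*$ prefix class of first-order logic is decidable, the undecidability must come from combining the limited quantifier prefix with the expressive power that the $5$-ary inclusion atom $A_{\mathit{5-inc}}$ provides. The natural target is an undecidable problem about a single binary relation together with infinitely many unary predicates — for instance the (un)tiling problem encoded through the unary relation symbols, or more directly a reduction from the halting problem / the unrestricted domino problem. The key observation to exploit is that, by Hannula's result cited earlier in this section, every inclusion logic sentence is equivalent to a sentence of prefix shape $\exists^*\forall^1$, and the $\exists^*\forall^*$-fragment of full inclusion logic is already undecidable; so the real content is to show that a \emph{single} $5$-ary inclusion atom, under a prefix with only three existential and one universal quantifier, already suffices to carry out that encoding when the vocabulary supplies a countably infinite supply of unary symbols and one binary symbol.

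First I would fix the undecidable source problem — I expect to use a tiling/recurrence problem or directly encode computations of a Minsky machine, since these are naturally expressible with one binary relation (a ``successor'' or adjacency relation) and unary predicates (encoding tile types or machine states and tape contents). Next I would design, for each instance, a $\tau$-sentence $\varphi$ of the exact prefix $\exists^3\forall[A_{\mathit{5-inc}}]$. The heart of the construction is to use the $5$-ary inclusion atom $A_{\mathit{5-inc}}(\overline{u},\overline{v})$ — which asserts $\rel(X,\overline{u})\subseteq\rel(X,\overline{v})$ — to force infinitely many global consistency constraints among the unary predicates and the binary relation simultaneously. Because team semantics lets one atom quantify over \emph{all} assignments at once, a single $5$-ary inclusion atom inside a universal block can encode a rich family of closure conditions that ordinary first-order $\exists^3\forall$ sentences cannot express, thereby breaking decidability. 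I would arrange the three existential variables to pin down distinguished elements or boundary markers, and use the lone universal variable together with the inclusion atom to propagate the tiling/computation constraints across the whole (possibly infinite) structure.

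The main obstacle, and the step I would spend the most care on, is showing that the encoding is \emph{faithful in both directions}: that the constructed sentence is (finitely) satisfiable exactly when the underlying tiling/machine instance has the property being reduced. The forward direction — a solution yields a model — should be routine once the intended model is described, interpreting the unary predicates as the tile/state labelling and the binary relation as the grid adjacency. The reverse direction is delicate: I must argue that \emph{any} model of $\varphi$, together with the witnessing functions for the existential quantifiers and the splitting teams for the disjunctions appearing in $\theta$, necessarily induces a genuine solution to the instance. This requires careful use of the precise team-semantic reading of $A_{\mathit{5-inc}}$ under a universally quantified team, and a verification that the limited prefix does not accidentally admit spurious models. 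Finally I would note that the same sentence works for the finite satisfiability problem whenever the source problem's finite variant is undecidable, giving undecidability of both $\sat(\exists^3\forall[A_{\mathit{5-inc}}](\tau))$ and $\finsat(\exists^3\forall[A_{\mathit{5-inc}}](\tau))$, which is exactly the claim of the theorem.
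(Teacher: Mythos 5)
Your plan names the right ingredients but stops short of a proof at exactly the point where the work lies. The paper's argument needs no new encoding at all: it reduces from the Kahr class, i.e., the prefix class $\forall\exists\forall$ of first-order logic over precisely this vocabulary $\tau$ (one binary and countably many unary relation symbols), whose satisfiability \emph{and} finite satisfiability problems are both classically undecidable \cite{graadeli}. The proof of \cite[Theorem 5]{Hannula14} already yields a polynomial-time translation $\varphi\mapsto\varphi^*$ from the Kahr class into $\exists^3\forall[A_{\mathit{5-inc}}](\tau)$ satisfying $\mA\models_X\varphi\Leftrightarrow\mA\models_X\varphi^*$ for every model $\mA$ and team $X$, and undecidability of both problems is then immediate. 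You cite Hannula's $\exists^*\forall^1$ normal form as motivation, but then set it aside and propose instead to build a tiling or Minsky-machine encoding from scratch in the prefix $\exists^3\forall$ with a single $5$-ary inclusion atom. That construction --- the actual sentence, the mechanism by which one inclusion atom together with a single universal variable simulates the quantifier alternation $\forall\exists\forall$ that grid constraints require (this is precisely what Hannula's translation accomplishes, and is where the arity $5$ comes from), and the two directions of faithfulness --- is exactly what you leave unexecuted; you yourself flag the reverse direction as ``delicate'' without resolving it. As written, this is a strategy sketch rather than a proof.

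A second concrete gap concerns finite satisfiability. Your closing remark that the same sentence handles $\finsat$ ``whenever the source problem's finite variant is undecidable'' is conditional on a choice of source problem you never fix: the unrestricted tiling problem is $\Pi^0_1$-complete and its standard use gives undecidability of satisfiability, but its finite variant is a genuinely different problem (e.g., torus tilings, or halting computations realized as finite models) requiring a separate encoding. Reducing from the Kahr class sidesteps this entirely, since both its satisfiability and finite satisfiability problems are undecidable and the translation preserves truth model-by-model, so finite models correspond to finite models. To salvage your direct route you would need either two separate encodings or a single source problem whose positive instances yield finite models, on top of supplying the missing construction itself.
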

\begin{proof}
It well known that for the Kahr class (i.e., the prefix class $\forall\exists\forall$ of $\mathrm{FO}$ with vocabulary $\tau$) the satisfiability and the finite satisfiability problems are undecidable (see, e.g., \cite{graadeli}). From the proof of \cite[Theorem 5]{Hannula14} it follows that there exists a polynomial time translation $\varphi\mapsto \varphi^*$ from the Kahr class into $\exists^3\forall[A_{\mathit{5-inc}}](\tau)$ such that
\(
\mA\models_X \varphi\Leftrightarrow \mA\models_X \varphi^*
\)
holds for every model $\mA$ and team $X$ with codomain $A$. Thus $\sat(\exists^3\forall[A_{\mathit{5-inc}}](\tau))$ and $\finsat(\exists^3\forall[A_{\mathit{5-inc}}](\tau))$ are undecidable.
\end{proof}

%
It is easy to see that  dependence atoms viewed as generalized atoms are closed under substructures because they are both downwards closed and universe independent. Likewise, it is straightforward to check that the class of dependence atoms is uniformly polynomial time computable. Hence we obtain the following corollary.
\begin{corollary}Both the satisfiability and the finite satisfiability problems for the $\exists^*\forall^*$-sentences of dependence logic are in  $2\NEXPTIME$.  If $\tau$ is a vocabulary consisting of relation symbols of arity at most $k$, then the satisfiability and the finite satisfiability problems for the $\exists^*\forall^*$-sentences of dependence logic over the vocabulary $\tau$ are
$\NEXPTIME$-complete.
\end{corollary}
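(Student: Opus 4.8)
The plan is to derive this corollary as a direct instance of the preceding corollary, the one concerning uniformly polynomial time computable classes $\mathcal{A}$ of generalized atoms that are closed under substructures. Dependence logic $\df$ is precisely $\fo(\mathcal{A})$ where $\mathcal{A}$ is the (infinite) class of all dependence atoms $\dep[x_1,\dots,x_n,y]$. Hence it suffices to verify that this particular class meets the two hypotheses of that corollary, namely closure under substructures and uniform polynomial time computability; the complexity bounds then follow immediately.

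First I would show that each dependence atom, viewed as a generalized atom, is closed under substructures. The cleanest route is to combine two properties that dependence atoms manifestly possess: they are \emph{downwards closed} (immediate from the semantics, whose defining clause quantifies universally over pairs $s,s'\in X$ and is therefore inherited by every subteam) and they are \emph{universe independent} (their truth depends only on the pattern of equalities among the values $s(x_i)$ and $s(y)$ realised inside the team, not on the ambient domain). Indeed, given $\mA\models_X \dep[x_1,\dots,x_n,y]$ together with some $B\subseteq A$, the restricted team $X\upharpoonright B$ is a subteam of $X$, so downward closure yields $\mA\models_{X\upharpoonright B}\dep[x_1,\dots,x_n,y]$; since $B$ is also a codomain for $X\upharpoonright B$, universe independence then gives $\mA\upharpoonright B\models_{X\upharpoonright B}\dep[x_1,\dots,x_n,y]$, which is exactly the closure-under-substructures condition.

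Second I would check that the class of dependence atoms is uniformly polynomial time computable. To decide $\mA\models_X \dep[x_1,\dots,x_n,y]$ one simply examines all pairs $s,s'\in X$ and tests whether agreement on the tuple $(x_1,\dots,x_n)$ forces agreement on $y$; this runs in time quadratic in $|X|$ times the length of the atom, and the very same polynomial bound applies uniformly to every atom in the class regardless of its arity. Thus a single polynomial $f$ witnesses uniform polynomial time computability for the whole class.

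With both hypotheses in hand, I would invoke the preceding corollary with $\mathcal{A}$ taken to be the class of dependence atoms, obtaining membership of $\sat(\exists^*\forall^*)$ and $\finsat(\exists^*\forall^*)$ for $\df$ in $\DNEXPTIME$, together with $\NEXPTIME$-completeness over vocabularies of bounded arity (the matching lower bound being inherited from plain first-order $\exists^*\forall^*$, exactly as in Theorem \ref{EAcomplete}). I expect no genuine obstacle here; the only point requiring care is that dependence logic contains \emph{infinitely many} dependence atoms, one for each arity, so it is essential to apply the \emph{uniform} class-level bound rather than the single-atom Theorem \ref{EAcomplete}. This is precisely why it is uniform polynomial time computability of the class, and not merely polynomial time computability of each individual atom, that must be established in the second step.
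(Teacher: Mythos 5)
Your proposal is correct and takes essentially the same route as the paper: the paper obtains this corollary by applying the preceding corollary to the class of dependence atoms, noting exactly as you do that they are closed under substructures (being both downwards closed and universe independent) and that the class is uniformly polynomial time computable. Your write-up merely fills in the details the paper declares straightforward, including the apt observation that the infinitude of dependence atoms is why the uniform, class-level corollary rather than the single-atom Theorem~\ref{EAcomplete} must be invoked.
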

%


\vspace{-5mm}

\section{Conclusion}

\vspace{-2mm}

%
%

We have tied some loose 
ends concerning the complexity of predicate logics 
based on team semantics. 
Using a general approach, we have shown that  the satisfiability and the finite satisfiability problems of the two-variable fragments of
inclusion logic, exclusion logic, inclusion/exclusion logic, and independence logic are all $\NEXPTIME$-complete. 
Additionally, we have shown that the satisfiability and the finite satisfiability problems of the prefix class $\exists^*\forall^*$ of dependence logic are 
$\NEXPTIME$-complete for any vocabulary of bounded arity, and in $2\NEXPTIME$ in the general case.
The general approach we have employed of
course also implies a range of other results
on team-semantics-based logics.
Finally, we have proved that the validity problem of two-variable dependence logic is  undecidable, thereby answering an open problem from the
literature on team semantics.
%

This article clears path to a more comprehensive classification of the decidability and complexity of different fragments of logics with
generalized atoms and team semantics. In the future, we aim to identify
further interesting related systems with a decidable satisfiability problem.\\

\bibliographystyle{plain}
\bibliography{tampere}

\begin{thebibliography}{10}

\bibitem{IEEEonedimensional:kieronski}
Saguy Benaim, Michael Benedikt, Witold Charatonik, Emanuel Kiero\'{n}ski,
  Rastislav Lenhardt, Filip Mazowiecki, and James Worrell.
\newblock Complexity of two-variable logic on finite trees.
\newblock In {\em ICALP (2)}, pages 74--88, 2013.

\bibitem{berger}
R.~Berger.
\newblock The undecidability of the domino problem.
\newblock {\em Memoirs of the American Mathematical Society}, (66):369--395,
  1966.

\bibitem{graadeli}
Egon B{\"o}rger, Erich Gr{\"a}del, and Yuri Gurevich.
\newblock {\em The Classical Decision Problem}.
\newblock Perspectives in Mathematical Logic. Springer, 1997.

\bibitem{IEEEonedimensional:charatonic}
Witold Charatonik and Piotr Witkowski.
\newblock Two-variable logic with counting and trees.
\newblock In {\em LICS}, pages 73--82. IEEE Computer Society, 2013.

\bibitem{dukovo16}
A.~Durand, J.~Kontinen, and H.~Vollmer.
\newblock {\em Expressivity and Complexity of Dependence Logic, in Dependence
  Logic: Theory and Applications}.
\newblock Springer, In Press, 2016.

\bibitem{inclusionlogic}
Pietro Galliani.
\newblock Inclusion and exclusion dependencies in team semantics - on some
  logics of imperfect information.
\newblock {\em Ann. Pure Appl. Logic}, 163(1):68--84, 2012.

\bibitem{GH2013}
Pietro Galliani and Lauri Hella.
\newblock Inclusion logic and fixed point logic.
\newblock In {\em proceedings of CSL 2013}, pages 281--295, 2013.

\bibitem{DBLP:journals/tcs/Gradel13}
Erich Gr{\"a}del.
\newblock Model-checking games for logics of imperfect information.
\newblock {\em Theor. Comput. Sci.}, 493:2--14, 2013.

\bibitem{grkova97}
Erich Gr{\"a}del, Phokion~G. Kolaitis, and Moshe~Y. Vardi.
\newblock On the decision problem for two-variable first-order logic.
\newblock {\em The Bulletin of Symbolic Logic}, 3(1):53--69, 1997.

\bibitem{grotro97}
Erich Gr{\"a}del, Martin Otto, and Eric Rosen.
\newblock Undecidability results on two-variable logics.
\newblock In {\em Proceedings of STACS '97}, pages 249--260, London, UK, 1997.
  Springer-Verlag.

\bibitem{independencelogic}
Erich Gr{\"a}del and Jouko V{\"a}{\"a}n{\"a}nen.
\newblock Dependence and independence.
\newblock {\em Studia Logica}, 101(2):399--410, 2013.

\bibitem{Hannula14}
Miika Hannula.
\newblock Hierarchies in inclusion logic with lax semantics.
\newblock In Mohua Banerjee and Shankara~Narayanan Krishna, editors, {\em Logic
  and Its Applications: 6th Indian Conference, ICLA 2015, Mumbai, India,
  January 8-10, 2015. Proceedings}, pages 100--118, Berlin, Heidelberg, 2015.
  Springer Berlin Heidelberg.

\bibitem{helkuu14}
Lauri Hella and Antti Kuusisto.
\newblock One-dimensional fragment of first-order logic.
\newblock In {\em Advances in Modal Logic 10, invited and contributed papers
  from the tenth conference on "Advances in Modal Logic," held in Groningen,
  The Netherlands, August 5-8, 2014}, pages 274--293, 2014.

\bibitem{hisa89}
Jaakko Hintikka and Gabriel Sandu.
\newblock Informational independence as a semantical phenomenon.
\newblock In J.~E. Fenstad, I.~T. Frolov, and R.~Hilpinen, editors, {\em Logic,
  Methodology and Philosophy of Science VIII}, volume 126, pages 571--589.
  Elsevier, Amsterdam, 1989.

\bibitem{ho97}
Wilfrid Hodges.
\newblock Compositional semantics for a language of imperfect information.
\newblock {\em Log. J. IGPL}, 5(4):539--563 (electronic), 1997.

\bibitem{KMP-HT14}
Emanuel Kiero\'{n}ski, Jakub Michaliszyn, Ian Pratt-Hartmann, and Lidia
  Tendera.
\newblock Two-variable first-order logic with equivalence closure.
\newblock {\em SIAM Journal of Computing}, 43(3):1012--1063, 2014.

\bibitem{Kontinen13}
Jarmo Kontinen.
\newblock Coherence and computational complexity of quantifier-free dependence
  logic formulas.
\newblock {\em Studia Logica}, 101(2):267--291, 2013.

\bibitem{kokulovi2}
Juha Kontinen, Antti Kuusisto, Peter Lohmann, and Jonni Virtema.
\newblock Complexity of two-variable dependence logic and if-logic.
\newblock {\em Inf. Comput.}, 239:237--253, 2014.

\bibitem{kuusigen}
Antti Kuusisto.
\newblock A double team semantics for generalized quantifiers.
\newblock {\em Journal of Logic, Language and Information}, 24(2):149--191,
  2015.

\bibitem{Lindstrom66}
Per Lindstr{\"o}m.
\newblock First order predicate logic with generalized quantifiers.
\newblock {\em Theoria}, 32:186--195, 1966.

\bibitem{paszte97}
Leszek Pacholski, Wieslaw Szwast, and Lidia Tendera.
\newblock Complexity of two-variable logic with counting.
\newblock In {\em Proceedings of LICS '97}, pages 318 --327, 1997.

\bibitem{IEEEonedimensional:pratth}
Ian Pratt-Hartmann.
\newblock Complexity of the two-variable fragment with counting quantifiers.
\newblock {\em Journal of Logic, Language and Information}, 14(3):369--395,
  2005.

\bibitem{IEEEonedimensional:tendera}
Wieslaw Szwast and Lidia Tendera.
\newblock $\mathrm{FO}^2$ with one transitive relation is decidable.
\newblock In {\em STACS}, pages 317--328, 2013.

\bibitem{va07}
Jouko V{\"a}{\"a}n{\"a}nen.
\newblock {\em Dependence logic: A new approach to independence friendly
  logic}.
\newblock Number~70 in London Mathematical Society student texts. Cambridge
  University Press, 2007.

\bibitem{jonnithesis}
Jonni Virtema.
\newblock {\em Approaches to Finite Variable Dependence: Expressiveness and
  Computational Complexity}.
\newblock PhD thesis, University of Tampere, 2014.

\end{thebibliography}

\end{document}